\begin{document}

\title{The Modal Cube Revisited: Semantics without Worlds\\(Technical Report)}

\author{Renato Leme\inst{1}\orcidlink{0000-0001-8199-1261} 
\and Carlos Olarte\inst{2}\orcidlink{0000-0002-7264-7773} 
\and 
Elaine Pimentel\inst{3}\orcidlink{0000-0002-7113-0801}
\and Marcelo E. Coniglio\inst{1}\orcidlink{0000-0002-1807-0520}}

\authorrunning{R. Leme et al.}

\institute{Centre for Logic, Epistemology and The History of Science, UNICAMP, Brazil\\
\email{rntreisleme@gmail.com}, \email{coniglio@unicamp.br} 
\and Universit\'{e} Sorbonne Paris Nord, CNRS, Laboratoire d'Informatique de Paris Nord, Villetaneuse, France
 \\ \email{olarte@lipn.univ-paris13.fr}
\and Department of Computer Science, University College London, UK\\ \email{e.pimentel@ucl.ac.uk}
}

\maketitle

\begin{abstract}
We present a non-deterministic semantic framework for all modal logics in the
modal cube, extending prior works by Kearns and others. Our approach introduces
modular and uniform multi-valued non-deterministic matrices (Nmatrices) for
each logic, where necessitation is captured by the systematic use of level
valuations. The semantics is grounded in an eight-valued system and provides a
sound and complete decision procedure for each modal logic, extending and
refining earlier semantics as particular cases. Additionally, we propose a
novel model-theoretic perspective that links our framework to relational
(Kripke-style) semantics, addressing longstanding questions regarding the
correspondence between modal axioms and semantic conditions in
non-deterministic settings. 
This yields a philosophically robust and technically modular alternative to traditional possible-world
semantics.

\keywords{Modal logics semantics \and Non deterministic matrices}

 \end{abstract}

\section{Introduction}\label{sec:intro}

Modal logics are built on top of propositional classical logic ($\PCL$) by introducing the modal operators $\Box$ and $\DIA$, which, under the alethic interpretation, correspond to {\em necessary} and {\em possible}, respectively. 
{Thus from a propositional formula $\alpha$, which in $\PCL$ has value either false ($\vF$) or true ($\vT$), one can construct the statements {\em necessarily} $\alpha$ and {\em possibly} $\alpha$, denoted by $\Box \alpha$ and $\DIA \alpha$, respectively.}

When interpreting these formulas in Kripke semantics, modalities {\em qualify} the notion of truth, which now depends on the world in which  $\alpha$ is being evaluated. 
 Kripke's concept of {\em possible worlds} thus provides an elegant framework for capturing the possible values of propositions across different worlds. \noindent
This natural generalization of the straightforward {truth table} semantics of $\PCL$ makes it an ideal foundation for the semantics of modal logics.

{A natural question that arises is whether one could define a single finite-valued truth table capable of capturing modal logics. The answer is negative. As early as the 1930s, G\"{o}del~\cite{feferman86goedel} proved that intuitionistic logic admits no finite-valued truth-functional semantics and, since it can be faithfully embedded in the modal logic~$\mSfour$, this implies that~$\mSfour$ itself is not finite-valued. In 1940, Dugundji extended this result to the entire modal cube~\cite{DBLP:journals/jsyml/Dugundji40} and this largely halted the study of truth tables for modal logics}\footnote{{Most recently, in~\cite{DBLP:journals/logcom/Gratz22} Gr\"{a}tz sealed the fate of this line of inquiry by showing that even non-deterministic finite truth tables are insufficient.}}.

In an effort to circumvent Dugundji's result--and as a compelling alternative to Kripke semantics--Kearns~\cite{DBLP:journals/jsyml/Kearns81} and Ivlev~\cite{Ivlev1988-IVLASF} independently developed frameworks for characterizing certain modal systems\footnote{Ivlev proposed semantics for a range of non-normal modal systems lacking the necessitation rule--representing weaker versions of $\mKT$ and $\mSfive$ (later extended in, \eg,~\cite{DBLP:journals/igpl/ConiglioCN20,DBLP:journals/jphil/PawlowskiS24,DBLP:journals/logcom/PawlowskiS25}). In this work, we will focus on the extension of Kearn's work to the (normal) modal cube. A related approach can be found in~\cite{DBLP:journals/corr/abs-2501-00492} and~\cite{ConiglioPS}.}.
Kearns used four-valued multivalued truth-functions\footnote{The notion of multivalued truth-functions was later formalized by Avron and Lev under the term {\em Nmatrices}~\cite{DBLP:journals/logcom/AvronL05}. See also~\cite{Avron2011}.}, and introduced a constraint on valuations known as {\em level-valuations} to characterize the modal systems $\mKT$, $\mSfour$, and $\mSfive$. The idea is that the level-0 corresponds to the standard distribution of truth values (just following the corresponding matrix), while  higher levels filter out valuations that violate the necessitation rule, preserving only those that assign the designated truth value to tautologies. 
Validity is therefore defined as the (infinite) intersection of all level valuations, corresponding to the closure w.r.t. necessitation. This layered structure gives to necessity a {\em global interpretation}.

Kearns' (and Ivlev's) contributions remained largely overlooked and relatively obscure until they were independently revisited in two separate lines of research. In~\cite{DBLP:journals/jancl/ConiglioCP15}, Coniglio, Fari\~{n}as del Cerro, and Peron reconstructed and extended these earlier results, proposing a characterization of the systems $\mKB$ using four-valued Nmatrices with level-valuations. They also introduced six-valued Nmatrices for a range of modal systems, including $\mKT$, $\mSfour$, $\mSfive$, $\mKD$, $\mKDB$, $\mKDfour$, and $\mKDfourfive$. Around the same time, Omori and Skurt, in~\cite{DBLP:journals/flap/SkurtO16}, also began by revisiting Kearns-Ivlev's original framework, and proposed an extension using eight truth values capable of capturing the modal logic $\mK$ and six truth values for $\mKTB$. While there is a considerable conceptual overlap between~\cite{DBLP:journals/jancl/ConiglioCP15} and~\cite{DBLP:journals/flap/SkurtO16}, the two works were developed independently.

Although conceptually interesting, these works have limited practical applicability, as determining level-valuations requires accounting for the valuations of {\em all} tautologies across {\em all} levels. This requires a twofold infinite testing: on formulas {\em and} levels.

This situation remained unchanged until the work of Gr\"{a}tz~\cite{DBLP:journals/logcom/Gratz22}, whose key contribution was the introduction of a \emph{decision procedure} for Nmatrices that is both sound and complete w.r.t. Kearns' semantics. This breakthrough renewed the attention to an otherwise underexplored area, 
ultimately leading to the extension of the methods of Kearns and Gr\"{a}tz to a broader class of logics.
{For instance,~\cite{IPL} presents a semantic characterization of propositional intuitionistic logic (IPL) using a three-valued non-deterministic matrix with a restricted set of valuations, enabling a remarkably simple decision procedure for IPL.}

This work follows this path by providing a decision procedure based on multi-valued non-deterministic matrices for {\em all logics} in the modal cube. In particular, our approach focuses on the following key aspects.

\noindent
{\bf Modularity.} The results
in~\cite{DBLP:journals/jancl/ConiglioCP15,DBLP:journals/flap/SkurtO16,DBLP:journals/igpl/ConiglioCN20}
share a common limitation: a lack of modularity. These works aim to ``explain''
or ``refine'' Kearns' original approach, not addressing the fundamental
challenge of uniformly extending it. In fact, different systems use different
sets of truth-values, often diverging significantly from standard choices. An
alternative axiomatization was proposed by Pawlowski and~la
Rosa~\cite{DBLP:journals/logcom/PawlowskiR22}, resulting in a modular rule of
necessitation to $\mKT$, $\mKTB$, $\mSfour$, and $\mSfive$ (and some non-normal
modal logics).

In contrast, our work adopts a fundamentally different approach: we begin (\Cref{sec:level}) with a uniform set of eight truth values and systematically develop level-valuations for {\em all} logics in the modal cube. We demonstrate that, under certain modal axioms, some of these truth values are eliminated, thereby recovering many of the semantics proposed in the {\em op. cit.} works.
Our definitions are guided by a \emph{modal characterization} of the truth values, which enables modular procedures for proving soundness and completeness of the semantics, thus unifying and generalizing existing systems in the literature. Notably, this generalization allowed for new level-semantics
for $\mKfour, \mKfive, \mKfourfive, \mKDfour, \mKDfive$ and $\mKBfive$. 

\noindent {\bf Decision procedure.} We propose new Nmatrix-based decision procedures for all 15 normal modal logics in the modal cube (\Cref{sec:truth}). This is achieved uniformly through the modal characterization of truth values. In doing so, we extend Gr\"{a}tz's work on $\mKT$ and $\mSfour$ to the entire cube, thereby completing the picture for this family of logics. These results highlight the potential of this alternative semantic framework.

\noindent
{\bf Modal Semantics without Possible Worlds?} Kearns concludes his paper with the following striking statement~\cite[p. 86]{DBLP:journals/jsyml/Kearns81}:
\begin{quote}
``The present semantic account [\ldots] is simpler than the standard account in virtue of having dispensed with possible worlds and their relations. I also think that my account is philosophically preferable to the standard account for having done this. For I do not think there are such things as possible worlds, or even that they constitute a useful fiction.''
\end{quote}

Unfortunately, the price of rejecting possible worlds may seem steep: one must contend with multi-valued truth values, non-deterministic matrices, and--prior to the development of decision procedures--at least two levels of valuations.
As a final contribution, we reestablish the connection with Kripke semantics by linking matrix filters to Kripke models,
thus providing 
 a more ``ecumenical'' perspective where the two semantics can coexist. 
 We thus settle a longstanding 
reflection posed by Omori and Skurt~\cite[p.
27]{DBLP:journals/flap/SkurtO16}: 
\begin{quote}
``One of the virtues of Kripkean semantics is the correspondence between axioms and the accessibility relations of the Kripke frame [\ldots] But a glance at the [Kearnsean] semantics for the systems [\ldots] introduced here reveals that {\em if there is a correspondence it is not a simple one}.''
\end{quote}

\section{Preliminaries}\label{sec:mK}

In what follows, $\For$ is the set of well-formed formulas in a propositional classical
language, $\wp(\mathcal{\For})$ is the powerset of $\mathcal{\For}$, 
$\alpha,\beta$ (resp. $\Delta,\Gamma,\Lambda$) range over 
elements of $\For$ (resp. $\wp(\mathcal{\For})$), and 
$\Lan$ is a Tarskian propositional
logic, that is, a pair $(\For , \cL)$ where $\cL$ is a consistent Tarskian
consequence relation. We say that $\Delta$ is a {\em consistent set} in $\Lan$
if there is no formula $\alpha$ such that $\Delta \cL \alpha$ and  $\Delta \cL
\neg \alpha$. $\Delta$ is {\em maximally consistent} if it is consistent and
$\Delta \cup \{ \beta \}$ is inconsistent for all
$\beta\in\For\backslash\Delta$. A maximally consistent set $\Delta$ is {\em
$\alpha$-saturated} iff $\Delta \nvdash^{\Lan} \alpha$. 
For any Tarskian logic $\Lan$, if $\Gamma
\not\cL\alpha$, then there is some $\alpha$-saturated set $\Delta$ 
such that $\Gamma \subseteq \Delta$ (see \eg\ \cite{Wojcicki1984-WJCLOP}).

The principle of compositionality states that the truth-value of a 
formula is fully determined by the truth-values of its subformulas. In the presence of 
incomplete or uncertain information, this principle is relaxed
using non-deterministic matrices
(Nmatrices)~\cite{DBLP:journals/logcom/AvronL05,Avron2011}, which allows the
truth-value of a formula to be selected from a set of possible values (rather 
than from a single one). 
\begin{definition}[Nmatrix]\label{def:nmatrix}
An {\em  Nmatrix} for $\Lan$ is a tuple $\M = \langle \V,\D,\Om\rangle$, where:
\begin{itemize}
\item $\V$ is a non-empty set of truth values.
\item $\D$ (designated truth values) is a non-empty proper subset of $\V$.
\item For every n-ary connective $\conn$ in $\Lan$, $\Om$ includes a  non-deterministic truth-function $\tilde\conn: \V^n \to \wp(\V)\backslash\varnothing$.
\end{itemize}
\end{definition}

\begin{definition}[Valuation]\label{def:val}
Let $\M = \langle \V,\D,\Om\rangle$ be an Nmatrix and $\Lambda\subseteq\For$  closed under subformulas.
A {\em partial valuation} in $\M$ is a function $\vv:\Lambda\to \V$ such that, for each n-ary connective $\conn$ in  $\Lan$, the following holds for all $\alpha_0, \ldots, \alpha_n\in\Lambda$:
$
   v (\conn (\alpha_0,\ldots, \alpha_n)) \in \tilde\conn (v(\alpha_0), \ldots, v(\alpha_n))
   $.
A partial valuation in $\M$ is a (total) {\em valuation} if its domain is $\For$. 
We denote by $\Val{\M}$ the set $\{v: \For \to \V \mid v \mbox{ is a valuation in } \M\}$.
\end{definition}

\noindent
The syntax of the modal logics considered here is given by the grammar
\[\alpha ::= p \mid  \bot \mid \alpha \to \alpha \mid \Box \alpha\]
where $p\in\At$, the set of propositional variable symbols. The other usual connectives
 for possibility $\DIA$, conjunction $\wedge$, disjunction $\vee$ and
negation $\neg$ are defined as abbreviations, \eg\  $\neg \alpha$ and $\DIA
\alpha$ abbreviate $\alpha\to\bot$ and $\neg\Box\neg \alpha$, respectively. 

The basic modal logic $\mK$  
is obtained by extending the ordinary Hilbert axioms for propositional classical logic with the modal axiom $\mk$, along with the rules of  necessitation and modus ponens:

$\mk: \Box(\alpha \to \beta) \to (\Box\alpha \to \Box\beta)\qquad
\vcenter{\infer[\mpo]{\beta}{\alpha\to\beta& \alpha}}\qquad\vcenter{\infer[\nec]{\Box\alpha}{\alpha}}
$

The {\em modal cube} is formed by extending $\mK$ with any
non-redundant combination of axioms $\md, \mt, \mb, \mfour$ and  $\mfive$. Such 
axioms characterize the usual frame conditions in the relational semantics. 
\begin{center}
\begin{minipage}[c]{0.55\textwidth}
$
\begin{array}{lll}
\md: \Box \alpha  \to \DIA\alpha &\quad & \mbox{Seriality}\\
\mt: \Box \alpha  \to \alpha & & \mbox{Reflexivity} \\
\mb: \alpha  \to \Box\DIA\alpha & & \mbox{Symmetry}\\
\mfour: \Box \alpha  \to \Box\Box\alpha & & \mbox{Transitivity} \\
\mfive: \DIA\alpha  \to \Box\DIA\alpha & & \mbox{Euclidianness}
\end{array}
$
\end{minipage}
\begin{minipage}[c]{0.22\textwidth}
\hbox to .4\textwidth{\hss\scalebox{.40}{
    \begin{tikzpicture}
      [every node/.style={inner sep=1pt,outer sep=0},
      logic/.style={shape=circle,draw}]
      \def\xx{4.5}\def\hxx{2}\def\hhxx{2}
      \def\yy{4}\def\hyy{2}\def\hhyy{1}
      \def\zz{-4}\def\hzz{-2}\def\hhzz{-1}
      \node[logic,label=225:\proofsystem{K}]   (ik)   at (0, 0, 0)            {} ;
      \node[logic,label=-45:\proofsystem{KB}]  (ikb)  at (\xx, 0, 0)          {} ;
      \node[logic,label=-45:\proofsystem{KB5}] (ikb5) at (\xx, 0, \zz)        {} ;
      \node[logic,label=170:\proofsystem{K4}]  (ik4)  at (0, 0, \zz)          {} ;
      \node[logic,label=180:\proofsystem{D}]   (id)   at (0, \hyy, 0)         {} ;
      \node[logic,label=135:\proofsystem{T}]   (it)   at (0, \yy, 0)          {} ;
      \node[logic,label=135:\proofsystem{S4}]  (is4)  at (0, \yy, \zz)        {} ;
      \node[logic,label=45:\proofsystem{S5}]   (is5)  at (\xx, \yy, \zz)      {} ;
      \node[logic,label=-45:\proofsystem{TB}]  (itb)  at (\xx, \yy, 0)        {} ;
      \node[logic,label=0:\proofsystem{DB}]    (idb)  at (\xx, \hyy, 0)       {} ;
      \node[logic,label=135:\proofsystem{D4}]  (id4)  at (0, \hyy, \zz)       {} ;
      \node[logic,label=-45:\proofsystem{D45}] (id45) at (\hxx, \hyy, \zz)    {} ;
      \node[logic,label=-45:\proofsystem{K45}] (ik45) at (\hxx, 0, \zz)       {} ;
      \node[logic,label=-10:\proofsystem{D5}]  (id5)  at (\hhxx, \hyy, \hhzz) {} ;
      \node[logic,label=-10:\proofsystem{K5}]  (ik5)  at (\hhxx, 0, \hhzz)    {} ;
      \draw[line width=.6pt,color=black!40]
      (ik) -- (ik4) -- (id4) (ik4) -- (ik45) ;
      \draw[line width=.7pt,color=black!60]
      (ik45) -- (ikb5)
      (id) -- (id4) -- (is4) (id4) -- (id45) -- (id5) -- (id)
      (ik) -- (ik5) -- (ik45) -- (id45) (ik5) -- (id5)
      (id45) to[bend left] (is5) ;
      \draw[line width=.8pt]
      (ik) -- (ikb) -- (ikb5) -- (is5) -- (is4) -- (it) -- (id) -- (ik)
      (it) -- (itb) -- (is5)
      (itb) -- (idb) -- (ikb)
      (id) -- (idb) ;
    \end{tikzpicture} }\hss}
\end{minipage}
\end{center}

\begin{notation}\label{rem:logics}
    We use the  shorthands below to identify some 
    sets of logics: 
\\

    \begin{tabular}{lll l lll}
        $\mK\star$  &=&    $\{\mK,  \mKB, \mKfour, \mKfive, \mKfourfive\}$ &\qquad&
        $\mKD\star$ &=&    $\{\mKD, \mKDB, \mKDfour, \mKDfive, \mKDfourfive\}$\\
        $\mKT\star$ &=&    $\{\mKT, \mKTB, \mKTfour, \mKTBfourfive\}$& \qquad & 
        $\mKBfourfive$ &=& $\{\mKBfourfive\}$
    \end{tabular}
\\

  \noindent Observe that $\mKTfour=\mSfour$, $\mKTBfourfive=\mSfive$ and $\mKBfourfive=\mKBfive$.
  In the forthcoming sections,   $\Lan$ always ranges over one of these fifteen normal modal logics. 
\end{notation}

\section{Level Valuations for the Modal Cube}\label{sec:level}
As already mentioned in the Introduction, Nmatrices do not capture the behavior of the necessitation rule. Kearns addressed this issue by restricting the set of valuations using {\em levels}: if a formula $\alpha$ receives a designated value with respect to all possible valuations--that is, if it is a tautology at a certain level--then $\Box\alpha$ will also receive a designated value, being a tautology at the next level. In other words, all valuations $v$ such that $v(\Box\alpha) \notin \D$, and $\alpha$ is a tautology,  are eliminated from the set of acceptable valuations.

This section  introduces the notion of level-valuation semantics for all
the logics in the modal cube in a modular way. We begin by presenting the
8-valued non-deterministic semantics for the logic $\mK$ which is inspired by, but different from, the version presented
in~\cite{DBLP:journals/flap/SkurtO16} (\Cref{sec:values}). Truth values are
given a \emph{modal characterization}, and we show that some of these values are 
eliminated in the semantics of certain extensions of $\mK$. We then construct
Nmatrices for all the logics in the modal cube and prove that
the semantics is sound with respect to each logic (\Cref{sec:matrices}).
Finally, we establish completeness by systematically defining
\emph{characteristic functions} directly from the meanings of the values
(\Cref{sec:comp}).

\subsection{About Truth Values and Their Meaning}\label{sec:values}

There are eight ways to ``qualify'' the truth of a formula~$\alpha$: it can be \{true, necessary, possible\} or not. These combinations are shown in~\Cref{table:truthVals}, giving rise to the eight truth-values considered in this work: $\V = \{\vF, \vf, \vff, \vfff, \vttt, \vtt, \vt, \vT\}$.

For example, if a formula~$\alpha$ has value~$\vF$, this means that~$\alpha$ is impossible: it is neither valid, nor necessary, nor possible. In other words, not only $\alpha$ does not hold ($\neg \alpha$), but its negation is both possible ($\Diamond \neg \alpha$) and necessary ($\Box \neg \alpha$). 

This intuitive interpretation will be made precise in~\Cref{sec:comp}, but for now, we proceed by developing the concepts guided by this intuition.

\begin{definition}[Value func.]\label{def:func}
Let $\val \in \V$. The {\em value function} $\val:\For\to \For$ 
 maps each $\alpha \in \For$ to the formula in the second column of \Cref{table:truthVals}.
\end{definition}

\begin{table}[!t]
    \begin{subtable}[b]{.45\textwidth}
        \centering
        \resizebox{.83\textwidth}{!}{
    \begin{tabular}{|l|c|}
    \hline
    \textbf{Truth-value}  & \textbf{Intuitive meaning}                       \\ \hline
     \; $v (\alpha) = \vF$   & $\Diamond \neg \alpha\wedge \neg \alpha\wedge \Box \neg \alpha$ \\ \hline
    \;  $v (\alpha) = \vf$   & $\Diamond \neg \alpha\wedge \neg \alpha\wedge \Diamond \alpha$  \\ \hline
   \;   $v (\alpha) = \vff$  & $\Box \alpha\wedge \neg \alpha\wedge \Box \neg \alpha$          \\ \hline
    \;  $v (\alpha) = \vfff$ & $\Box \alpha\wedge \neg \alpha\wedge \Diamond \alpha$           \\ \hline
   \;   $v (\alpha) = \vttt$ & $\Diamond \neg \alpha\wedge \alpha\wedge \Box \neg \alpha$      \\ \hline
    \;  $v (\alpha) = \vtt$  & $\Box \alpha\wedge \alpha\wedge \Box \neg \alpha$               \\ \hline
  \;    $v (\alpha) = \vt$   & $\Diamond \neg \alpha\wedge \alpha\wedge \Diamond \alpha$       \\ \hline
  \;    $v (\alpha) = \vT$   & $\Box \alpha\wedge \alpha\wedge \Diamond \alpha$                \\ \hline 
    \end{tabular}
}
    \caption{Values and  \emph{modal characterization}}\label{table:truthVals}
	\end{subtable}
    \begin{subtable}[b]{.33\textwidth}
        \centering
        \resizebox{1.3\textwidth}{!}{
    \begin{tabular}{l|l}
    \textbf{Distinguished sets}  & \textbf{\,Main feature}  \\ \hline
    \; $\mathcal{D} = \{ \vT, \vt, \vtt, \vttt \}$ &\quad $\alpha$ is true\\
    \; $\mathcal{D}^{\complement} = \{ \vF, \vf, \vff, \vfff \}$ &\quad $\neg \alpha$ is true\\
 \;  $\sN = \{ \vT, \vtt, \vfff, \vff \}$  &\quad  $\alpha$ is necessary\\
 \; $\sI = \{ \vF, \vff, \vttt, \vtt \}$  &\quad  $\neg \alpha$ is necessary\\
 \; $\sP = \{ \vT, \vt, \vfff, \vf \}$ &\quad  $\alpha$ is possible\\
 \; $\sPN = \{ \vF, \vf, \vttt, \vt \}$ &\quad  $\neg \alpha$ is possible 
    \end{tabular}
}
    \caption{Distinguished sets}\label{table:ds}
	\end{subtable}
\caption{Truth-values and distinguished sets}\label{tab:values}
\vspace{-0.6cm}
    \end{table}

\Cref{table:ds} classifies the truth-values
according to the necessity or possibility of~$\alpha$
or~$\neg\alpha$. For instance, $\vt(\alpha)=\Diamond \neg \alpha\wedge \alpha\wedge \Diamond \alpha$
and hence this value is in the sets $\D$ (designated), $\sP$ ($\alpha$ is possible)
and $\sPN$ ($\neg \alpha$ is possible). 
It is worth noticing  that the values $\vtt$ and $\vff$ are, at the same
time,  necessary ($\sN$) and impossible ($\sI$). As we will see in Section~\ref{sec:rel}, these represent, in our relational model, valuations with no successor states, in which both $\Box \alpha$ and $\Box \neg\alpha$ hold trivially.

Next, we show that the intuitive meaning of values is well defined: a
consistent set cannot prove both $\val(\alpha)$ and $\valtwo(\alpha)$ for 
two different truth-values $\val$ and $\valtwo$. 
{Moreover, we show that the truth values ${\vtt, \vff}$ are {\em stable} in the sense that, if they appear in a valuation, then the entire valuation (\ie, the corresponding row in the Nmatrix) is composed solely of these values.}

\begin{lemma}[Consistency on Values]\label{lemma:cons:vals}
Let $\Lan$ be a modal logic, 
$\val$ and $\valtwo$ be two different truth-values, $\alpha,\beta$ be formulas,  and $\Delta$ be a
consistent set in $\Lan$. 
Then
\begin{enumerate}
\item if $\Delta \vdash^{\Lan}
\funchar{\val}{\alpha}$ then $\Delta \not\vdash^{\Lan} \funchar{\valtwo}{\alpha}$; and
\item if $\val\in\{\vtt, \vff\}$, $\Delta \vdash^{\Lan} \funchar{\val}{\alpha}$ and 
 $\Delta \vdash^{\Lan} \funchar{\valtwo}{\beta}$ then $\valtwo\in \{\vtt,\vff\}$. 
 \end{enumerate}
\end{lemma}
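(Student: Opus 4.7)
The strategy is to exploit the structural uniformity of the modal characterizations in \Cref{table:truthVals}. Each $\funchar{\val}{\alpha}$ is a conjunction of three formulas, one chosen from each of the three pairs $\{\Box\alpha,\Diamond\neg\alpha\}$, $\{\alpha,\neg\alpha\}$, and $\{\Diamond\alpha,\Box\neg\alpha\}$. Since $\Diamond$ abbreviates $\neg\Box\neg$ and classical double negation is a theorem, in every normal modal logic the equivalences $\Diamond\neg\alpha\leftrightarrow\neg\Box\alpha$ and $\Box\neg\alpha\leftrightarrow\neg\Diamond\alpha$ are derivable via necessitation and the $\mk$ axiom applied to $\alpha\leftrightarrow\neg\neg\alpha$. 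Thus, within each of the three pairs above, the two options are provably negations of one another.

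For part~(1), two distinct truth-values $\val\neq\valtwo$ must differ in at least one conjunct position; picking such a position, the conjunct of $\funchar{\val}{\alpha}$ is classically equivalent to the negation of the corresponding conjunct of $\funchar{\valtwo}{\alpha}$. Hence if $\Delta$ proved both $\funchar{\val}{\alpha}$ and $\funchar{\valtwo}{\alpha}$, it would prove a formula together with its negation, contradicting the consistency of $\Delta$.

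For part~(2), note that $\funchar{\vtt}{\alpha}$ and $\funchar{\vff}{\alpha}$ are precisely the two characterizations whose first and third conjuncts are $\Box\alpha$ and $\Box\neg\alpha$ (no $\Diamond$-conjunct appears); every other line of \Cref{table:truthVals} contains a conjunct of the shape $\Diamond\delta$, with $\delta\in\{\beta,\neg\beta\}$. Starting from $\Delta\vdash^{\Lan}\Box\alpha$ and $\Delta\vdash^{\Lan}\Box\neg\alpha$, I plan to apply the classical theorem $\alpha\to(\neg\alpha\to\gamma)$, then necessitation and two uses of the $\mk$ axiom, to obtain $\Delta\vdash^{\Lan}\Box\gamma$ for \emph{every} formula $\gamma$. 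Specializing $\gamma$ to $\neg\delta$ yields $\Delta\vdash^{\Lan}\neg\Diamond\delta$ for every $\delta$, so by consistency $\Delta\not\vdash^{\Lan}\Diamond\delta$ for any $\delta$. This rules out any $\valtwo\notin\{\vtt,\vff\}$, since $\Delta\vdash^{\Lan}\funchar{\valtwo}{\beta}$ would entail $\Delta\vdash^{\Lan}\Diamond\delta$ for some $\delta\in\{\beta,\neg\beta\}$.

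The main obstacle is really just the auxiliary fact $\Box\alpha,\Box\neg\alpha\vdash^{\mK}\Box\gamma$, which is a standard derivation in the base logic $\mK$ and therefore available uniformly in all fifteen logics of the cube (each extending $\mK$). With this step established, the rest reduces to routine inspection of the eight lines of \Cref{table:truthVals}.
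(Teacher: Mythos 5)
Your proof is correct and follows essentially the same strategy as the paper's: part (1) rests on the observation that any two distinct characterizations contain a pair of mutually contradictory conjuncts, and part (2) on the fact that $\Box\alpha\wedge\Box\neg\alpha$ yields $\Box\neg\beta$ (hence $\neg\Diamond\beta$) for every $\beta$, contradicting the $\Diamond$-conjunct present in every value outside $\{\vtt,\vff\}$. The only difference is cosmetic: the paper routes part (2) through $\Box\bot$ and the $\mK$-tautology $(\Diamond\beta\to\Box\bot)\to\Box(\beta\to\bot)$, whereas you derive $\Box\gamma$ for arbitrary $\gamma$ directly via necessitation and two applications of $\mk$ to the classical theorem $\alpha\to(\neg\alpha\to\gamma)$ --- an equally standard and, if anything, more self-contained derivation.
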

\begin{proof}
(1) follows from the consistency of $\Delta$ and the
fact that 
{$\funchar{\val}{\alpha}$ and $\funchar{\valtwo}{\alpha}$ each contain a subformula that contradicts the other. }
\Eg\  
if $\val=\vF$ and $\valtwo=\vf$, it cannot be the case that both $\Delta\vdash^{\Lan} \Box\neg\alpha$
and $\Delta\vdash^{\Lan}\Diamond \alpha$ hold.\\ {
For (2), first note that $(\Box \neg \alpha\wedge\Box \alpha)\to\Box\bot$ is an instance of axiom $\mk$ and that $(\Diamond \alpha \to \Box \beta) \to \Box(\alpha \to \beta)$ is a $\mK$-tautology for any $\alpha,\beta$.\\
Suppose that $\val\in\{\vtt, \vff\}$ and $\Delta \vdash^{\Lan} \funchar{\val}{\alpha}$. Thus
$\Delta\vdash^{\Lan}\Box \neg \alpha\wedge\Box \alpha$
and hence $\Delta\vdash^{\Lan}\Box\bot$. 
Assume now that $\Delta \vdash^{\Lan} \funchar{\valtwo}{\beta}$ with $\valtwo \not\in\{\vtt, \vff\}$. Hence, 
 it must be the case that $\Delta\vdash^{\Lan}\beta'$ where $\beta' \in\{\Diamond \beta, \Diamond \neg \beta\}$.
Suppose $\beta'=\Diamond \beta \equiv \neg\Box\neg \beta$. 
Since $\Delta\vdash^{\Lan}(\Diamond \beta \to \Box \bot) \to \Box(\beta \to \bot)$, by $\mpo$ we have $\Delta \vdash^{\Lan}\Box \neg \beta$, which is a contradiction. 
The case where $\beta'=\Diamond \neg \beta$ is similar.}
\qed
\end{proof}

 \subsection{Matrices and Level Semantics}\label{sec:matrices}
{Before introducing the Nmatrices for all the logics in the modal cube, it is important to note that not all eight truth values are required for every logic. The set of values used depends on the specific axioms that characterize each system.
For example, consider the axiom $\mT = \Box\alpha \to \alpha$. This axiom rules out certain combinations of truth values: specifically, any valuation that assigns to formulas 
the values $\vff$, $\vfff$, $\vtt$ or $\vttt$ would violate the axiom by allowing both $\Box \alpha$ and $\neg \alpha$ to hold simultaneously. Thus, such valuations render the axiom unsound. As a result, these four values cannot appear in any model of the logic $\mKT$ or its extensions, and the semantics effectively collapses to the four-valued framework proposed by Kearns in~\cite{DBLP:journals/jsyml/Kearns81}.}
Similarly, the axiom $\mD = \Box \alpha \to \Diamond \alpha$ excludes the values $\vff$ and $\vtt$ from the logic $\mKD$, since $\Box \alpha$ and $\Box \neg\alpha$ cannot both hold. This restriction gives rise to the six-valued semantics presented in~\cite{DBLP:journals/jancl/ConiglioCP15}.

\begin{table}[!t]
    {\scriptsize
    \begin{subtable}[t]{\textwidth}
        \qquad
    \begin{minipage}{.1\linewidth}
        \centering
        \begin{tabular}{|c|c|c|}
            \hline
            $\tilde \bot$ & $\{\vF, \vff \}$ \\ \hline
            \end{tabular}
	\end{minipage}
    \qquad
    \begin{minipage}{.40\linewidth}
      \centering
      \begin{tabular}{|c|c|c|c|c|c|c|c|c|}
\hline
        $\alpha \tilde\to \beta$ & $\vF$   & $\vf$        & $\vff$  & $\vfff$ & $\vttt$ & $\vtt$ & $\vt$      & $\vT$ \\ \hline
        $\vF$              & $\{\vT\}$   & $\{\vT\}$        & $\{\vT\}$   & $\{\vT\}$   & $\{\vT\}$   & $\{\vT\}$  & $\{\vT\}$      & $\{\vT\}$ \\ \hline
        $\vf$              & $\{\vt\}$   & $\{\vT, \vt\}$   & $\{\vtt\}$  & $\{\vT\}$   & $\{\vt\}$   & $\{\vT\}$  & $\{\vT, \vt\}$ & $\{\vT\}$ \\ \hline
        $\vff$             & $\{\vttt\}$ & $\{\vt\}$        & $\{\vtt\}$  & $\{\vT\}$   & $\{\vttt\}$ & $\{\vtt\}$ & $\{\vt\}$      & $\{\vT\}$ \\ \hline
        $\vfff$            & $\{\vttt\}$ & $\{\vt\}$   & $\{\vtt\}$  & $\{\vT\}$   & $\{\vttt\}$ & $\{\vtt\}$ & $\{\vt\}$      & $\{\vT\}$ \\ \hline
        $\vttt$            & $\{\vfff\}$ & $\{\vfff\}$      & $\{\vfff\}$ & $\{\vfff\}$ & $\{\vT\}$   & $\{\vT\}$  & $\{\vT\}$      & $\{\vT\}$ \\ \hline
        $\vtt$             & $\{\vF\}$   & $\{\vf\}$        & $\{\vff\}$  & $\{\vfff\}$ & $\{\vttt\}$ & $\{\vtt\}$ & $\{\vttt\}$    & $\{\vT\}$ \\ \hline
        $\vt$              & $\{\vf\}$   & $\{\vf, \vfff\}$ & $\{\vfff\}$ & $\{\vfff\}$ & $\{\vt\}$   & $\{\vT\}$  & $\{\vT, \vt\}$ & $\{\vT\}$ \\ \hline
        $\vT$              & $\{\vF\}$   & $\{\vf\}$        & $\{\vff\}$  & $\{\vfff\}$ & $\{\vttt\}$ & $\{\vtt\}$ & $\{\vt\}$      & $\{\vT\}$ \\ \hline
        \end{tabular}
    \end{minipage}\caption{$\tilde\bot$ and $\tilde\to$ for all the families, {where $\alpha$/$\beta$-values are the rows/columns.} \label{table:imp}}
    \end{subtable}

    \begin{subtable}[b]{.60\textwidth}
        \centering
\begin{tabular}{|c|c|c|c|c|c|}
\hline
$\alpha$ & $\Box^{\mK} \alpha$     & $\Box^{\mKB} \alpha$    & $\Box^{\mKfour} \alpha$  & $\Box^{\mKfive} \alpha$ & $\Box^{\mKfourfive} \alpha$ \\ \hline
$\vF$    & $\{ \vF, \vf, \vfff \}$ & $\{ \vF \}$             & $\{ \vF, \vf, \vfff \}$ & $\{ \vF \}$             & $\{ \vF \}$                \\ \hline
$\vf$    & $\{ \vF, \vf, \vfff \}$ & $\{ \vF \}$             & $\{ \vF, \vf, \vfff \}$ & $\{ \vF \}$             & $\{ \vF \}$                \\ \hline
$\vff$   & $\{ \vtt \}$            & $\{ \vtt \}$            & $\{ \vtt \}$            & $\{ \vtt \}$            & $\{ \vtt \}$               \\ \hline
$\vfff$  & $\{ \vT, \vt, \vttt \}$ & $\{ \vttt \}$           & $\{ \vT \}$             & $\{ \vT, \vttt \}$      & $\{ \vT \}$                \\ \hline
$\vttt$  & $\{ \vF, \vf, \vfff \}$ & $\{ \vF, \vf, \vfff \}$ & $\{ \vF, \vf, \vfff \}$ & $\{ \vF \}$             & $\{ \vF \}$                \\ \hline
$\vtt$   & $\{ \vtt \}$            & $\{ \vtt \}$            & $\{ \vtt \}$            & $\{ \vtt \}$            & $\{ \vtt \}$               \\ \hline
$\vt$    & $\{ \vF, \vf, \vfff \}$ & $\{ \vF, \vf, \vfff \}$ & $\{ \vF, \vf, \vfff \}$ & $\{ \vF \}$             & $\{ \vF \}$                \\ \hline
$\vT$    & $\{ \vT, \vt, \vttt \}$ & $\{ \vT, \vt, \vttt \}$ & $\{ \vT \}$             & $\{ \vT, \vttt \}$      & $\{ \vT \}$                \\ \hline
\end{tabular}
        \caption{$\tilde\Box$ for family $\mK\star$\label{table:multiKBox}}
    \end{subtable}
    \qquad 
    \begin{subtable}[b]{.4\textwidth}
        \centering
    \begin{tabular}{|c|c|c|}
        \hline
        $\alpha$   & $\Box^{\mKBfourfive} \alpha$  \\ \hline
        $\vF$   & $\{ \vF \}$            \\ \hline
        $\vf$   & $\{ \vF \}$            \\ \hline
        $\vff$ & $\{ \vtt \}$           \\ \hline
        $\vtt$ & $\{ \vtt \}$           \\ \hline
        $\vt$   & $\{ \vF \}$            \\ \hline
        $\vT$   & $\{ \vT \}$            \\ \hline
        \end{tabular}
        \caption{$\tilde\Box$ for family $\mKBfourfive$\label{table:multiKBforfive}}
    \end{subtable}
    \begin{subtable}[b]{.59\textwidth}
        \centering
\begin{tabular}{|c|c|c|c|c|c|}
\hline
$\alpha$ & $\Box^{\mKD} \alpha$    & $\Box^{\mKDB} \alpha$   & $\Box^{\mKDfour} \alpha$ & $\Box^{\mKDfive} \alpha$ & $\Box^{\mKDfourfive} \alpha$ \\ \hline
$\vF$    & $\{ \vF, \vf, \vfff \}$ & $\{ \vF \}$             & $\{ \vF \}$             & $\{ \vF \}$              & $\{ \vF \}$                 \\ \hline
$\vf$    & $\{ \vF, \vf, \vfff \}$ & $\{ \vF \}$             & $\{ \vF, \vf, \vfff \}$ & $\{ \vF \}$              & $\{ \vF \}$                 \\ \hline
$\vfff$  & $\{ \vT, \vt, \vttt \}$ & $\{ \vttt \}$           & $\{ \vT \}$             & $\{ \vT, \vttt \}$       & $\{ \vT \}$                 \\ \hline
$\vttt$  & $\{ \vF, \vf, \vfff \}$ & $\{ \vF, \vf, \vfff \}$ & $\{ \vF \}$             & $\{ \vF \}$              & $\{ \vF \}$                 \\ \hline
$\vt$    & $\{ \vF, \vf, \vfff \}$ & $\{ \vF, \vf, \vfff \}$ & $\{ \vF, \vf, \vfff \}$ & $\{ \vF \}$              & $\{ \vF \}$                 \\ \hline
$\vT$    & $\{ \vT, \vt, \vttt \}$ & $\{ \vT, \vt, \vttt \}$ & $\{ \vT \}$             & $\{ \vT, \vttt \}$       & $\{ \vT \}$                 \\ \hline
\end{tabular}
        \caption{$\tilde\Box$ for family $\mKD\star$\label{table:multiKDBox}}
    \end{subtable}
    \begin{subtable}[b]{.40\textwidth}
        \centering
\begin{tabular}{|c|c|c|c|c|}
\hline
$\alpha$ & $\Box^{\mKT} \alpha$ & $\Box^{\mKTB} \alpha$ & $\Box^{\mKTfour} \alpha$ & $\Box^{\mKTBfourfive} \alpha$ \\ \hline
$\vF$    & $\{ \vF \}$          & $\{ \vF \}$           & $\{ \vF \}$             & $\{ \vF \}$                  \\ \hline
$\vf$    & $\{ \vF, \vf \}$     & $\{ \vF \}$           & $\{ \vF, \vf \}$        & $\{ \vF \}$                  \\ \hline
$\vt$    & $\{ \vF, \vf \}$     & $\{ \vF, \vf \}$      & $\{ \vF, \vf \}$        & $\{ \vF \}$                  \\ \hline
$\vT$    & $\{ \vT, \vt \}$     & $\{ \vT, \vt \}$      & $\{ \vT \}$             & $\{ \vT \}$                  \\ \hline
\end{tabular}
        \caption{$\tilde\Box$ for family $\mKT\star$\label{table:multiKTBox}}
    \end{subtable}
}
\vspace{-0.3cm}
    \caption{Multifunctions for all the logics in the modal cube. \label{table:box}}
\vspace{-0.6cm}
\end{table}

Accordingly, each family of logics introduced in \Cref{rem:logics} corresponds to a distinct subset of truth values, as detailed next.

\begin{definition}[Nmatrices for the modal cube]\label{def:mat-cube}
The set of admissible truth-values for each family of logics is as follows:
\[
\begin{array}{lll}
\mathcal{V}(\mK\star) = \{ \vF, \vf, \vff, \vfff, \vttt, \vtt, \vt, \vT \} & \qquad &
\mathcal{V}(\mKBfourfive) = \{ \vF, \vf, \vff, \vtt, \vt, \vT \} \\
\mathcal{V}(\mKD\star) = \{ \vF, \vf, \vfff, \vttt, \vt, \vT \} & \qquad &
\mathcal{V}(\mKT\star) = \{ \vF, \vf, \vt, \vT \}
\end{array}
\]
The {\em set of values} for a logic $\Lan$
in the family $\rangfamily$, denoted $\V(\Lan)$, 
is $\V(\rangfamily)$;
the set of {\em designated values} of $\Lan$, 
denoted $\mathcal{D}(\Lan)$, is $\mathcal{D}(\Lan) = \mathcal{V}(\rangfamily) \cap \mathcal{D}$; 
and the 
 corresponding {\em set of non-designated values}, denoted $\mathcal{D}^\complement(\Lan)$, is 
  $\mathcal{D}^\complement(\Lan) = \mathcal{V}(\rangfamily) \cap \mathcal{D}^\complement$.
The {\em Nmatrix} $\mathcal{M}$ associated 
to  $\Lan$ 
is determined by its set of values
$\mathcal{V}(\Lan)$, its designated values $\mathcal{D}(\Lan)$,
the non-deterministic functions for $\to$ (restricted to the domain $\V(\Lan)$) and $\bot$ in \Cref{table:imp}, and the
appropriate non-deterministic function for $\Box$  in \Cref{table:box}.
\end{definition}

Matrices for the derived connectives can be defined in the usual way. For
example, the matrix for $\Diamond$ is fully dual to that of $\Box$, obtained
via negation. Note that the matrix for $\neg \alpha$ is already
included in \Cref{table:imp}, as it corresponds to the special case $\alpha \to
\bot$. Finally, we note that our Nmatrices  are refinements of those presented in~\cite{DBLP:journals/flap/SkurtO16}, in the sense that certain valuations--those that would be excluded by level-valuations--have already been omitted.

While Kearns addressed the issue of necessitation in Nmatrices through level-valuations, our approach introduces an additional layer of complexity: it requires a uniform and general definition that applies coherently across all the logics in the modal cube. 
Hence, at level-0, besides considering
    only valuations that follow the corresponding Nmatrix, 
    we also incorporate the condition imposed by the presence of {\em stable values} (see~\Cref{lemma:cons:vals}). 
    Moreover, a  valuation is not removed iff it assigns a designated $\mathcal{N}$-value to tautologies in the previous level.
\begin{definition}[Level-valuation in $\mathcal{M}$]\label{def:level-val}
The {\em $n$-level-valuation} for an Nmatrix $\mathcal{M}$ is a set of total valuations defined inductively as follows:
    \begin{itemize}
    \item $\lv{0}(\mathcal{M}) ~~= \{ v \in \Val{\mathcal{M}}\mid (\exists \alpha,
	v(\alpha)\in\{\vff,\vtt\}) \Rightarrow \forall \beta, v(\beta)\in\{\vff,\vtt\}\}$;
    \item 
  $\lv{n+1}(\mathcal{M}) = \{ v \in \lv{n}(\mathcal{M})\mid  \forall \alpha,~\lsc{n} \alpha \Rightarrow v(\alpha) \in \{ \vT, \vtt \}\}$
\end{itemize}
where $\lsc{n} \alpha$  denotes $\forall w \in \lv{n}(\M), w(\alpha)\in \mathcal{D}$. 
The set of {\em level-valuations} in $\mathcal{M}$ is given by $\lv{}(\mathcal{M}) = \bigcap^{\infty}_{n=0} \lv{n}(\mathcal{M})$. The {\em level-semantic consequence relation} $\models^{\M}$ is defined as: $\Gamma \models^{\M}\alpha \mbox{ iff } \forall v\in\lv{}(\M): (\forall\beta\in\Gamma:v(\beta)\in\D) \Rightarrow v(\alpha)\in\D
$.
A formula $\alpha$ is {\em valid} in $\M$  if $\models^{\M} \alpha$.
\end{definition}

As already noted by Kearns,
$\lv{n}(\M)$ is not  closed under necessitation as illustrated below  for the family $\mKT_\star$ (left)
and family $\mKD_\star$ (right). 

\vspace{0.2cm}
\noindent
\resizebox{.8\textwidth}{!}{
\begin{tabular}{ccc}
\begin{tabular}{|c|c|c|c|c|c|}
\hline
& $p$ \; &$p\to p$ & $\Box(p \to p)$ & $\Box\Box(p \to p)$ & \ldots\\ \hline
1 & $\vF$& $\vT$  &  $\vT$ &  $\{\vT,\vt\}$ & \\ \hline
2 & $\vF$  & $\vT$  &  $\vt$ & $\{\vF,\vf\}$   &\\ \hline
3 & $\vf$   & $\vt$   &  $\{\vF,\vf\}$  & $\{\vF,\vf\}$ &\\ $\vdots$    &    &    & & & \\ \hline
\end{tabular}
&\qquad\qquad&
\begin{tabular}{|c|c|c|c|c|c|}
\hline
& $p$ \; &$p\to p$ & $\Box(p \to p)$ & $\Box\Box(p \to p)$ & \ldots\\ \hline
1 & $\vF$& $\vT$  &  $\vT$ & $\{\vT,\vt,\vttt\}$  & \\ \hline
2 & $\vF$  & $\vT$  &  $\vt$ & $\{\vF,\vf,\vfff\}$ &\\ \hline
3 & $\vF$  & $\vT$  &  $\vttt$ & $\{\vF,\vf,\vfff\}$ &\\ $\vdots$    &    &    & & & \\ \hline
\end{tabular}
\end{tabular}
}
\vspace{0.2cm}

\noindent
Each line in these tables actually represents multiple valuations. For instance, line 1 on the left table corresponds to two valuations: one where
$\Box\Box(p \to p)$ takes the value $\vT$ and another where it takes the value $\vt$. As we move through the levels, some of these valuations are discarded. For example, in the left table, line 3 is eliminated at level 1 because $p \to p$ is a tautology at level 0, and so $\Box(p \to p)$ must hold at level 1. Similarly, line 2 is discarded at level 2. The right table illustrates that, in the family $\mKD_\star$, the $\Box$ non-deterministic function can map designated values to non-designated ones.

Unlike earlier definitions of level-valuations, our definition of $\lv{0}$ 
includes an extra constraint on the values $\vtt$ and $\vff$.
Such a constraint is a natural consequence of the statement (2) in 
\Cref{lemma:cons:vals}. Furthermore, in logics that include the axioms $\mt$ or $\md$, the values $\vtt$ and $\vff$ are excluded altogether, and the definition reduces to the standard one in the literature. 
Finally, the definition of $\lv{n}$ for $n \geq 1$ refines and unifies previous
approaches as it considers the only truth values  ``persist''
under necessitation: for every non-deterministic function $\tilde\Box$ in \Cref{def:mat-cube}, 
if $\val \in \{\vT,\vtt\}$ then $\tilde\Box(\val)\in \D$ (which is not the case for $\vt$ and
$\vttt$).

We are now ready to state and prove soundness. 
\begin{theorem}[Soundness]\label{thm:soundness}
For  $\Lan$, 
$\Gamma \vdash^{\Lan} \alpha$ implies $\Gamma\lsc{} \alpha$.
\end{theorem}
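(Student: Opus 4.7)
The plan is a proof by induction on the length of a derivation of $\Gamma \vdash^{\Lan} \alpha$. The delicate point is the necessitation rule, so I would strengthen the induction with the following auxiliary claim: every theorem of $\Lan$ (derivable from the empty set) satisfies $\lsc{n}\alpha$ for some finite $n \geq 0$. This claim is proved by a simultaneous induction on derivations from the empty set, and is exactly what allows the level machinery to keep pace with the $\nec$ rule.

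For the base cases, assumptions $\alpha \in \Gamma$ are handled by the definition of $\lsc{}$. For axiom instances, I would verify $\lsc{0}\alpha$ by direct inspection of \Cref{table:imp,table:box}. The classical axioms reduce to checking that $\tilde\to$ and $\tilde\bot$ behave classically on the $\D/\D^{\complement}$ partition, which the tables were designed to ensure. The axiom $\mk$ is checked by case analysis on the values of $\alpha$ and $\beta$, using that $\tilde\Box$ always outputs values consistent with the modal characterization of \Cref{table:truthVals}. For each extra axiom $\md, \mt, \mb, \mfour, \mfive$, I would appeal to \Cref{def:mat-cube}: $\V(\Lan)$ is carved out precisely so that the value combinations that refute the axiom cannot arise, and the relevant $\tilde\Box^{\Lan}$ only produces outputs inside $\V(\Lan)$. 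Modus ponens follows from direct inspection of $\tilde\to$: whenever $v(\alpha), v(\alpha\to\beta) \in \D$, every element of $\tilde\to(v(\alpha), v(\beta))$ lies in $\D$, and with level $\max(n_1, n_2)$ one obtains $\lsc{\max(n_1,n_2)}\beta$ from $\lsc{n_1}(\alpha\to\beta)$ and $\lsc{n_2}\alpha$ via the monotonicity $\lv{n+1}(\M) \subseteq \lv{n}(\M)$.

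The necessitation case is where the auxiliary claim pays off. Suppose $\vdash^{\Lan}\alpha$ and, by the induction hypothesis, $\lsc{n}\alpha$ for some $n$. Then \Cref{def:level-val} forces every $v \in \lv{n+1}(\M)$ to assign $v(\alpha) \in \{\vT, \vtt\}$; and inspection of every $\tilde\Box^{\Lan}$ in \Cref{table:box} confirms the key persistence property $\tilde\Box^{\Lan}(\vT) \subseteq \D$ and $\tilde\Box^{\Lan}(\vtt) \subseteq \D$. Hence $v(\Box\alpha) \in \D$ for every $v \in \lv{n+1}(\M) \supseteq \lv{}(\M)$, giving $\lsc{n+1}\Box\alpha$ and therefore $\lsc{}\Box\alpha$.

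The main obstacle is the axiom case: while not conceptually difficult, it amounts to a considerable case analysis multiplied across the 15 matrices. The modular construction mitigates this burden by pushing the real combinatorial work into the design of $\V(\Lan)$ and $\tilde\Box^{\Lan}$, so that each axiom's soundness becomes a matter of reading off the pruned entries. A further subtle point is that the stability clause on $\{\vff,\vtt\}$ built into $\lv{0}(\M)$ -- a direct reflection of \Cref{lemma:cons:vals}(2) -- is essential for validating $\mk$ in the full eight-valued $\mK\star$ family, since otherwise valuations mixing a stable value on one formula with a non-stable value on another could refute it.
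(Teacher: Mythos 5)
Your proof is correct in all its load-bearing steps and follows essentially the same route as the paper: a first induction establishing that every theorem satisfies $\lsc{n}\alpha$ (handling axioms by inspection of the tables, $\mpo$ via the implication matrix, and $\nec$ via the persistence of $\{\vT,\vtt\}$ under $\tilde\Box$), followed by a lift to arbitrary $\Gamma$ -- the paper does this lift via the deduction theorem rather than your direct induction on derivations from $\Gamma$, but the two are interchangeable here since $\nec$ is only ever applied to theorems.

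One remark in your last paragraph is wrong, though: the stability clause on $\{\vff,\vtt\}$ in $\lv{0}(\M)$ is \emph{not} needed to validate $\mk$. A valuation refuting $\mk$ would need $v(\alpha)\in\sN$, $v(\beta)\notin\sN$ and $v(\alpha\to\beta)\in\sN$, and inspection of \Cref{table:imp} shows that $\tilde\to(\val,\valtwo)\cap\sN=\emptyset$ whenever $\val\in\sN$ and $\valtwo\notin\sN$ (e.g.\ $\tilde\to(\vff,\vt)=\{\vt\}$, $\tilde\to(\vtt,\vttt)=\{\vttt\}$), so mixed stable/non-stable assignments already fail to produce a countermodel at the level of the raw Nmatrix. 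Since the stability constraint only shrinks $\lv{0}(\M)$, it can never be needed for soundness; its role is on the completeness side, where \Cref{lemma:cons:vals}(2) guarantees the characteristic function survives the extra filter. This does not affect the validity of your proof, only the justification you offer for that clause.
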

\begin{proof}
    We first show (\textbf{Part I}), by induction on $n$, that if $\vdash^{\Lan}\alpha$ with length $n+1$ then $\lsc{n}\alpha$. 
 In the base case ($n=0$), $\alpha$ is an axiom and it is straightforward to verify that all axioms are tautologies, \ie, $\lsc{0}\alpha$.
 In the inductive step, 
assume the statement holds for all proofs of length less than $n+1, n\geq 0$. 
\begin{itemize}
\item Suppose $\alpha$ is obtained by $\mpo$ to $\gamma \to \alpha$ and $\gamma$. By induction, $\lsc{n}\gamma \to \alpha$ and $\lsc{n}\gamma$. Hence $v(\gamma),v(\gamma\to\alpha)\in\{\vT,\vtt\}$ for any $v \in \lv{n+1}$. 
A direct inspection of the truth table for implication (see \Cref{table:imp}) shows that in this case, $v(\alpha) \in  \D$.

\item Suppose $\alpha = \Box\gamma$ is derived by $\nec$. Again, by the inductive hypothesis, $\lsc{n}\gamma$ and then $v(\gamma)\in\{\vT,\vtt\}$. But this implies that $v(\Box\gamma) \in  \D$.
\end{itemize}
We now generalize the result above. 
Assume that  $\Gamma\vdash^{\Lan}\alpha$ with length $n+1$. We will prove that $\Gamma\lsc{n}\alpha$. There are two cases to consider.
\begin{itemize}
    \item $\alpha$ is a tautology. Hence $\vdash^{\Lan}\alpha$ and this case follows directly from (\textbf{Part I}). 
    \item There are $\gamma_1,\ldots,\gamma_k\in\Gamma$ such that $\vdash^{\Lan}\bigwedge_{i=1}^k \gamma_i\to\alpha$. From (\textbf{Part I}), $\lsc{n}\bigwedge_{i=1}^k \gamma_i\to\alpha$ and hence $\Gamma\lsc{n}\alpha$.
\end{itemize}
Since every valuation is, in particular, an $n$-level valuation for some $n$, it follows that if $\Gamma\vdash^{\Lan}\alpha$ then 
$\Gamma\lsc{}\alpha$.
 \qed
\end{proof}
 \subsection{Completeness of the Level Semantics}\label{sec:comp}

In this section, we prove the completeness of the level semantics.
The proof of this theorem relies on the
well-known Lindenbaum-\L o\'s construction method, showing that every consistent set can be
extended to a maximally consistent set. In the level semantics, a key step in
this construction is to define a \emph{characteristic function}, which we later
show to be a valid level valuation. Interestingly enough, 
we can define such a function directly 
from the interpretation of the semantic values presented in
\Cref{table:truthVals} (see the notation $\val(\alpha)$ in \Cref{def:func}). 

\begin{definition}[Characteristic function for Nmatrices]\label{def:char-func}
    Let $\Lan$ be a modal logic and $\Delta$ be a maximally consistent set in 
    $\Lan$. The {\em characteristic function} $\valchar : \For \to \mathcal{V}(\Lan)$
    is defined as 
    $
        \valchar(\alpha) = \val \mbox{ iff } \Delta \vdash^{\Lan} \val(\alpha)
    $. 
\end{definition}

We observe that, by part (1) of \Cref{lemma:cons:vals}, 
$\valchar(\cdot)$ is indeed a well-defined function. Moreover, the characteristic function 
is unique for all logics in a family. For example, the characteristic function for any logic $\Lan$ in the
family  $\mKT\star$ is:\\

$
    \small
\valchar(\alpha) = 
\begin{cases}
\text{$\vF$} & \text{iff $\Delta \vdash^{\Lan} \Box \neg \alpha$} \\
\text{$\vf$} & \text{iff $\Delta \vdash^{\Lan} \neg \alpha$ and $\Delta \vdash^{\Lan} \Diamond \alpha$} \\
\text{$\vt$} & \text{iff $\Delta \vdash^{\Lan} \alpha$ and $\Delta \vdash^{\Lan} \Diamond \neg \alpha$} \\
\text{$\vT$} & \text{iff $\Delta \vdash^{\Lan} \Box \alpha$} \\
\end{cases}
$.\\

\noindent Note that we use a simplified version of the definition. Rather than writing ``$\vT$ iff $\Delta \vdash^{\Lan} \Box \alpha \wedge \alpha \wedge \Diamond \alpha$'', we simply write ``$\vT$ iff $\Delta \vdash^{\Lan} \Box \alpha$''. This simplification is justified since, in this family of systems, the axiom $\mT$ guarantees that from $\Delta \vdash^{\Lan} \Box \alpha$ we can derive both $\Delta \vdash^{\Lan} \alpha$ and $\Delta \vdash^{\Lan} \Diamond \alpha$.

The following
results show that, for any modal logic $\Lan$, the corresponding characteristic
function is a level-valuation. We start by considering
level 0.

\begin{lemma}[Adequacy of $\valchar$]\label{lemma:adq}
For any modal logic $\Lan$
with Nmatrix $\M$, 
and 
     maximally consistent set $\Delta$ in $\Lan$, 
$\valchar \in \Val{\M}$. 
\end{lemma}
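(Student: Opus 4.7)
The plan is to verify, for each connective $\conn \in \{\bot, \to, \Box\}$ of the language, that the clause $\valchar(\conn(\alpha_1,\ldots,\alpha_n)) \in \tilde{\conn}(\valchar(\alpha_1),\ldots,\valchar(\alpha_n))$ of \Cref{def:val} holds. First I would observe that $\valchar$ is well-defined and total: since $\Delta$ is maximally consistent, for each of $\alpha$, $\Box \alpha$, $\Diamond \alpha$ either the formula or its negation is in $\Delta$, so exactly one of the conjunctions in \Cref{table:truthVals} is provable from $\Delta$; uniqueness is guaranteed by part~(1) of \Cref{lemma:cons:vals}. Moreover, when $\Lan$ extends $\mKT$ or $\mKD$, the axioms $\mt$ or $\md$ rule out exactly those values absent from $\mathcal{V}(\Lan)$, so $\valchar$ takes values in $\mathcal{V}(\Lan)$ as required.

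For $\bot$: consistency of $\Delta$ gives $\Delta \vdash^{\Lan} \neg \bot$, and necessitation gives $\Delta \vdash^{\Lan} \Box \neg \bot$; then either $\Delta \vdash^{\Lan} \Diamond \neg \bot$ (yielding $\valchar(\bot)=\vF$) or $\Delta \vdash^{\Lan} \Box \bot$ (yielding $\valchar(\bot)=\vff$), and both lie in $\tilde\bot$. For $\to$: I would perform a case analysis on $(\valchar(\alpha), \valchar(\beta)) \in \mathcal{V}(\Lan)^2$. For each pair, the characterizations of $\alpha$ and $\beta$ in $\Delta$ determine, by purely propositional reasoning combined with the $\mk$ axiom and necessitation, a compatible value for $\alpha \to \beta$ which must appear in the corresponding cell of \Cref{table:imp}. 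For instance, whenever $\Delta \vdash^{\Lan} \neg \alpha$ we obtain $\Delta \vdash^{\Lan} \alpha \to \beta$ propositionally, and if in addition $\Delta \vdash^{\Lan} \Box \neg \alpha$ then $\mk$ and necessitation yield $\Delta \vdash^{\Lan} \Box(\alpha \to \beta)$; dual arguments produce $\Diamond(\alpha \to \beta)$ from $\Diamond \neg \alpha$ or $\Diamond \beta$. For $\Box$: a case analysis on $\valchar(\alpha)$, using the appropriate extension axioms ($\mb, \mfour, \mfive$ and their combinations) to extract the status of $\Box\Box \alpha$, $\Box\neg\Box \alpha$, and $\Diamond \Box \alpha$ from the characterization of $\alpha$, and then matching the derived formula to a row in \Cref{table:multiKBox}--\ref{table:multiKTBox}.

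The hard part will not be conceptual but combinatorial: the bookkeeping involves up to $64$ entries for $\tilde\to$ and $8$ entries for $\tilde\Box$ per logic, across the fifteen logics of the cube. The work is kept manageable by the uniformity imposed by the modal characterization: whenever two logics share the same admissible values in a row, the same axiomatic justification applies, so many cases collapse. A subtle point requiring care is the behaviour of the stable values $\vff, \vtt$: by part~(2) of \Cref{lemma:cons:vals}, if $\valchar$ takes a stable value on some formula, then it takes only stable values on \emph{all} formulas, and this is consistent with the matrix entries since $\tilde\to$ and $\tilde\Box$ restricted to $\{\vff, \vtt\}$ remain within that set, so these cases collapse into a single uniform subcase rather than proliferating through the tables.
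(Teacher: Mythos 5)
Your proposal is correct and follows essentially the same route as the paper: a case analysis on the main connective, reducing each matrix entry to a proof obligation of the form ``$\Delta \vdash^{\Lan} \val(\alpha_0)$ implies $\Delta \vdash^{\Lan} \bigvee_{\valtwo\in V}\valtwo(\conn(\ldots))$'' discharged in the Hilbert system, with the bulk of the work being the tabulated verification (which the paper relegates to appendix tables). Your additional observations on totality/well-definedness and on the stable values $\vff,\vtt$ are correct and match remarks the paper makes outside the proof itself.
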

\begin{proof}
Let $\alpha=\conn(\alpha_0,\ldots \alpha_n)$ be a formula with main connective
$\conn$. We proceed by case analysis on $\conn$ to show that $v^{\Lan}_{\Delta}(\alpha) \in
\tilde\conn(v^{\Lan}_{\Delta}(\alpha_0), \ldots, v^{\Lan}_{\Delta}(\alpha_n))$.
Consider the case $\conn=\Box$ and assume that $v^{\Lan}_{\Delta}(\alpha_0)$ is
some value $\val$. By definition, 
\begin{equation}
\Delta \vdash^{\Lan} \funchar{\val}{\alpha_0}\tag{1}
\end{equation}
Assume that $\tilde\Box(\val)$ is a set $V$. Our goal is to show that:
$v^{\Lan}_{\Delta}(\Box\alpha_0)\in V$ which, by definition, is equivalent to proving:
\begin{equation}
\Delta \vdash^{\Lan}\bigvee_{\valtwo\in V}\funchar{\valtwo}{\Box \alpha_0}\tag{2}
\end{equation} 
Using
any deductive system for $\Lan$, it suffices to show that (1) implies (2). All
resulting proof obligations are detailed in the Appendix. 
\qed \end{proof}

\begin{lemma}\label{lemma:lval}
For every level $n$ and maximally consistent set $\Delta$, $v^\Lan_{\Delta} \in \lv{n}(\mathcal{M})$.
\end{lemma}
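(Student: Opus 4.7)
The plan is to prove the statement by induction on the level $n$, with the maximally consistent set $\Delta$ universally quantified inside each induction step (so that the inductive hypothesis gives $v^{\Lan}_{\Delta'} \in \lv{n}(\mathcal{M})$ for \emph{every} maximally consistent $\Delta'$, not only for the distinguished $\Delta$).

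For the base case $n=0$, I would combine two ingredients already available. The adequacy lemma (\Cref{lemma:adq}) gives $v^{\Lan}_{\Delta} \in \Val{\M}$. It then remains to verify the extra stability clause in $\lv{0}$: if $v^{\Lan}_{\Delta}(\alpha) \in \{\vff,\vtt\}$ for some $\alpha$, then $v^{\Lan}_{\Delta}(\beta) \in \{\vff,\vtt\}$ for every $\beta$. This is immediate from part (2) of \Cref{lemma:cons:vals}, since $v^{\Lan}_{\Delta}(\gamma) = \val$ means by definition $\Delta \vdash^{\Lan} \funchar{\val}{\gamma}$.

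For the inductive step, assume the statement for $n$ and fix an arbitrary maximally consistent $\Delta$. To show $v^{\Lan}_{\Delta} \in \lv{n+1}(\mathcal{M})$, take any $\alpha$ with $\lsc{n}\alpha$; the goal is $v^{\Lan}_{\Delta}(\alpha) \in \{\vT,\vtt\}$. From the inductive hypothesis applied to $\Delta$ itself, $v^{\Lan}_{\Delta} \in \lv{n}(\M)$, and from $\lsc{n}\alpha$ we get $v^{\Lan}_{\Delta}(\alpha) \in \mathcal{D}$. So the only thing to rule out is that $v^{\Lan}_{\Delta}(\alpha) \in \{\vt,\vttt\}$; equivalently, I need $\Delta \vdash^{\Lan} \Box \alpha$, since inspection of \Cref{table:truthVals} shows that $\Box\alpha$ is a conjunct of $\funchar{\val}{\alpha}$ exactly when $\val \in \{\vT,\vtt,\vff,\vfff\}$, and among these only $\vT$ and $\vtt$ are designated.

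The heart of the argument is then the following auxiliary claim: \emph{$\lsc{n}\alpha$ implies $\vdash^{\Lan} \alpha$}. I would establish it by contraposition. If $\not\vdash^{\Lan}\alpha$, the Tarskian Lindenbaum property quoted in \Cref{sec:mK} yields an $\alpha$-saturated (hence maximally consistent) set $\Delta'$ with $\Delta' \not\vdash^{\Lan}\alpha$. Because each designated value in $\mathcal{V}(\Lan)$ has $\alpha$ itself as a conjunct of its modal characterization, $v^{\Lan}_{\Delta'}(\alpha) \notin \mathcal{D}$. But by the inductive hypothesis applied to $\Delta'$, $v^{\Lan}_{\Delta'} \in \lv{n}(\M)$, contradicting $\lsc{n}\alpha$. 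Once the claim is proved, necessitation gives $\vdash^{\Lan} \Box\alpha$, so $\Delta \vdash^{\Lan} \Box\alpha$, which together with $v^{\Lan}_{\Delta}(\alpha) \in \mathcal{D}$ forces $v^{\Lan}_{\Delta}(\alpha) \in \{\vT,\vtt\}$, as required.

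The delicate point—and the one I would be most careful about—is precisely this auxiliary claim: it looks circular at first sight, because it resembles a completeness statement that the whole machinery is meant to deliver. The escape is that the quantification over $\Delta$ sits \emph{inside} the inductive statement, so the inductive hypothesis furnishes characteristic valuations $v^{\Lan}_{\Delta'}$ at level $n$ for every maximally consistent $\Delta'$ we care to produce, and these can be used as concrete witnesses falsifying $\lsc{n}\alpha$ whenever $\alpha$ fails to be a theorem. No separate completeness result is invoked, only the Lindenbaum extension and the necessitation rule of $\Lan$.
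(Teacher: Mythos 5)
Your proof is correct and follows essentially the same route as the paper's: induction on $n$, the base case via \Cref{lemma:adq}, and the inductive step deriving $\vdash^{\Lan}\alpha$ from $\lsc{n}\alpha$ by contradiction through a Lindenbaum extension, followed by necessitation to place $v^{\Lan}_{\Delta}(\alpha)$ in $\sN\cap\D=\{\vT,\vtt\}$. If anything, you are slightly more careful than the paper at level $0$, where the extra stability clause on $\{\vff,\vtt\}$ in the definition of $\lv{0}(\M)$ does require the appeal to part (2) of \Cref{lemma:cons:vals} that you make explicit and the paper leaves implicit.
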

\begin{proof}
    We proceed by induction on $n$. The base case was proved in \Cref{lemma:adq}. 
	Assume that $\lsc{n} \alpha$ for some formula $\alpha$ and suppose that $\alpha$ is not a theorem in $\Lan$, that is,
$\nvdash^{\Lan} \alpha$. Thus there is 
a maximally consistent set $\Theta$ such that $\Theta \nvdash^{\Lan} \alpha$, which means that $\Theta \vdash^{\Lan} \neg \alpha$. Hence, $v_{\Theta}^{\mathcal{L}}(\alpha) \not\in
\mathcal{D}$. However, by IH, $v_{\Theta}^{\mathcal{L}} \in
\mathcal{\Lan}_n(\mathcal{M})$ and, by hypothesis, $\lsc{n}
\alpha$, implying $v_{\Theta}^{\mathcal{L}} (\alpha) \in \mathcal{D}$, a contradiction.

\noindent
Therefore, $\vdash^{\Lan} \alpha$ and, by ($\nec$), $\vdash^{\Lan} \Box \alpha$. Hence, for
every maximally consistent set $\Delta$, $\Delta \vdash^{\Lan} \alpha$ and $\Delta
\vdash^{\Lan} \Box \alpha$. By definition, $v^\Lan_{\Delta} (\alpha) \in
\mathcal{D}$ and $v^\Lan_{\Delta} (\Box \alpha) \in \mathcal{D}$ and hence, 
$v^\Lan_{\Delta} (\alpha) \in \sN \cap \mathcal{D}$ (see the set $\sN$ in \Cref{tab:values}) 
        and $v^\Lan_{\Delta} \in \mathcal{\Lan}_{n+1}$ as needed.\qed
\end{proof}
\begin{theorem}[Completeness]\label{theo:LvalComplete} For every  modal logic $\Lan$ and associated Nmatrix $\M$, if
$ \Gamma \lsc{} \alpha$ then $\Gamma \vdash^{\Lan} \alpha $.
\end{theorem}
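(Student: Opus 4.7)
The plan is to prove the contrapositive: assuming $\Gamma \not\vdash^{\Lan} \alpha$, I will exhibit a level-valuation $v \in \lv{}(\M)$ satisfying $v(\beta) \in \D$ for every $\beta \in \Gamma$ while $v(\alpha) \notin \D$. The first step is to invoke the Lindenbaum-style fact recalled in the Preliminaries: since $\Gamma \not\vdash^{\Lan} \alpha$, there is an $\alpha$-saturated set $\Delta \supseteq \Gamma$. Because $\Delta$ is maximally consistent and $\alpha \notin \Delta$, classical closure forces $\Delta \vdash^{\Lan} \neg\alpha$.

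I then take $v := v^{\Lan}_{\Delta}$ from \Cref{def:char-func}. \Cref{lemma:lval} directly yields $v \in \lv{n}(\M)$ for every $n \geq 0$, and hence $v \in \lv{}(\M)$. It remains to show that this specific $v$ separates $\Gamma$ from $\alpha$. Here the key observation is purely structural: by inspecting \Cref{table:truthVals}, every value in $\D = \{\vT,\vt,\vtt,\vttt\}$ contains $\alpha$ as one of the three conjuncts of its characterizing formula $\val(\alpha)$, while every non-designated value contains $\neg\alpha$ among its conjuncts. Consequently $v(\gamma) \in \D$ iff $\Delta \vdash^{\Lan} \gamma$, so each $\beta \in \Gamma$ receives a designated value (since $\Gamma \subseteq \Delta$), whereas $v(\alpha) \notin \D$ (since $\Delta \vdash^{\Lan} \neg\alpha$). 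This contradicts $\Gamma \lsc{} \alpha$.

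The main obstacle in this theorem is not the separation argument, which is essentially immediate once the right $\Delta$ is fixed, but rather checking that the plan transfers uniformly across all fifteen logics of the cube. For the families where $\V(\Lan)$ is a proper subset of the full eight-value set (namely $\mKBfourfive$, $\mKD\star$ and $\mKT\star$), the characteristic function must still land inside $\V(\Lan)$. This is guaranteed by the extra axioms of each family, which rule out precisely the excluded values via \Cref{lemma:cons:vals}: \eg\ the presence of $\mt$ prevents $\Delta$ from deriving both $\Box\alpha$ and $\neg\alpha$, so no value in $\{\vff,\vfff,\vtt,\vttt\}$ is ever reached. Hence the definition of $v^{\Lan}_{\Delta}$ remains well-typed for each $\Lan$, and the contrapositive argument closes uniformly.
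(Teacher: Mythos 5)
Your proposal is correct and follows essentially the same route as the paper's own proof: pass to an $\alpha$-saturated $\Delta\supseteq\Gamma$, use \Cref{lemma:lval} to see that $v^{\Lan}_{\Delta}$ is a level-valuation, and separate $\Gamma$ from $\alpha$ via the designated values. Your additional remarks on why $v^{\Lan}_{\Delta}(\gamma)\in\D$ iff $\Delta\vdash^{\Lan}\gamma$ and on the well-definedness of the characteristic function over the restricted value sets $\V(\Lan)$ merely make explicit what the paper delegates to the discussion surrounding \Cref{def:char-func} and \Cref{lemma:cons:vals}.
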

\begin{proof}
    If $\Gamma \nvdash^{\Lan} \alpha$ then there is some $\alpha$-saturated set $\Delta$ such that 
    $\Gamma \subseteq \Delta$. By \Cref{lemma:lval}, $v^\Lan_{\Delta}$ is a level valuation and 
 $\Delta \nvdash^{\Lan} \alpha$; hence $v^\Lan_{\Delta} (\alpha) \not\in \mathcal{D}(\Lan)$ and, 
given that $\Delta \vdash^{\Lan} \beta$, for every $\beta \in \Delta$, then $v^\Lan_{\Delta} (\beta) \in \mathcal{D}(\Lan)$. Therefore, $\Gamma \nvDash^{\mathcal{M}} \alpha$.\qed
\end{proof}

\section{Decision Procedures for the Modal Cube}

\label{sec:truth}

The previous section introduced level semantics for all the logics in the modal cube, providing new results for $\mKfour$, $\mKfive$, $\mKfourfive$, $\mKDfour$, $\mKDfive$, and $\mKBfive$. However, in practice, checking level semantics is challenging, as each step involves verifying whether a given formula is a tautology (see \Cref{def:level-val}). To address this, the current section introduces a {\em partial valuation} semantics (see \Cref{def:val}), which offers a sound and complete finitary method for validating formulas across all the logics in the modal cube. 

We
show that such partial valuations can be \emph{extended} to total valuations,
and that these total valuations are also valid level-valuations. This property, known as \emph{analyticity}, is a key step for establishing the soundness of
this semantics. We then prove \emph{co-analyticity}, showing that 
restricting a level-valuation to a closed set of formulas yields a valid partial valuation. 
Using co-analyticity and \Cref{theo:LvalComplete}, we prove completeness of the partial valuation semantics.

Our decision procedures not only generalize those proposed by Gr\"{a}tz in~\cite{DBLP:journals/logcom/Gratz22} by covering the entire modal cube, but also show how such procedures can be {\em systematically} and {\em modularly} derived from the functions that define truth values (\Cref{def:func}). Additionally, we bridge the Kearns interpretation and traditional Kripke semantics by identifying, based on the interpretation of truth values, the dependencies and relational conditions (in the Kripke sense) that partial valuations must satisfy. These conditions determine how partial valuations are extended and provide a modular approach to proving the results in this section. 

Our method thus brings valuations and possible-world semantics into closer alignment by demonstrating that the derived relational conditions correspond to the standard frame conditions of the respective modal logics.

\subsection{From Truth Values to Dependencies and Relational Conditions}\label{sec:rel}

\paragraph{Relational models on Valuations.} 
The algorithm introduced
in~\cite{DBLP:journals/logcom/Gratz22} for building (finite) truth tables
(\ie, partial valuations) consists of two steps. First, compute the full truth
table for a set of formulas closed under subformulas following the Nmatrix.
Since the resulting table is not necessarily  sound (it may assign non-designated
values to tautologies), the algorithm  systematically removes ``non valid''
partial valuations (\ie, rows in the table). 
A key contribution of our work is to 
propose uniform criteria for all the considered logic by identifying  ``non valid'' partial valuations as  those whose
\emph{dependencies} are not properly satisfied, as explained below. 

Consider a partial valuation $\vv$ and a formula 
$\alpha$ such that $\vv(\alpha)=\vt$. This truth value 
indicates
that $\alpha$ is 
\emph{contingently true}: 
$\alpha$ is true but \emph{possibly false} (recall that $\vt(\alpha)=\Diamond \neg \alpha\wedge \alpha\wedge \Diamond \alpha$). 
 Therefore, there must exist 
 a valuation $\vw$ in which $\alpha$ is false (\ie, $\Diamond \neg \alpha$ holds). This means that the column corresponding to $\alpha$ must contain at least one non-designated value. 
If no such a valuation/row exists,then $\alpha$ is not
 contingently true but \emph{necessarily true}---contradicting the assumption that $\vv(\alpha) = \vt$. In that case, $\vv$ is not a valid valuation and must be discarded.
We thus say that assigning $\vv(\alpha)=\vt$
 imposes a \emph{dependency} on $\vv$: it requires the existence of 
 another valuation $\vw$ to satisfy the semantics. The same reasoning
applies to all truth values $\val$ in the sets
 $\sP$ and $\sPN$, due to the presence of the formulas $\Diamond \alpha$ and $\Diamond \neg \alpha$ in its modal
 characterization. 

However, the mere existence of the above valuation $\vw$ is not sufficient: it must properly \emph{support} the values assigned by $\vv$. 
Suppose, for instance, that for some
formula $\beta$, we have $\vv(\beta) \in \sN$, that is, $\beta$ is necessarily true at $\vv$. 
Then, by $\nec$, $\vw(\beta)$ must be a designated value. 
Moreover, the modal axioms of the logic in question may impose stricter requirements. For example, in any extension of $\mKfour$, if $\vv(\beta) \in \sN$, then $\vw(\beta)$ must not only be designated but must itself also belong to $\sN$. 
We refer to these additional constraints as {\em relational conditions} (see Table~\ref{table:boxCond}): they determine the criteria that a valuation $\vw$ must satisfy in order to support a given valuation $\vv$.

Below, we formalize the above intuitions. 

\begin{table}[t]\centering
\centering
\resizebox{0.82\textwidth}{!}{
\begin{tabular}{|ll|l|l|}
\hline
\multicolumn{2}{|c|}{\textbf{Property }}                                                                                       & \multicolumn{1}{c|}{\textbf{Condition}}                            & \multicolumn{1}{c|}{\textbf{Implies}} \\ \hline
\multicolumn{1}{|l|}{\multirow{2}{*}{$\nec$}}          & \multirow{2}{*}{$\emph{any}$}                               & $\vw (\alpha) \in \sN$ and $\vw R \vw'$               & $\vw' (\alpha) \in \mathcal{D}$      \\ \cline{3-4} 
\multicolumn{1}{|l|}{}                                       &                                                             & $\vw (\alpha) \in \sI$ and $\vw R \vw'$               & $\vw' (\alpha) \not\in \mathcal{D}$  \\ \hline
\multicolumn{1}{|l|}{\multirow{2}{*}{$\mt$}} & \multirow{2}{*}{$\Box \alpha \to \alpha$}                   & $\vw (\alpha) \in \sN$                                                 & $\vw (\alpha) \in \mathcal{D}$       \\ \cline{3-4} 
\multicolumn{1}{|l|}{}                                       &                                                             & $\vw (\alpha) \in \sI$                                                 & $\vw (\alpha) \not\in \mathcal{D}$   \\ \hline
\multicolumn{1}{|l|}{\multirow{2}{*}{$\md$}} & \multirow{2}{*}{$\Box \alpha \to \Diamond \alpha$}          & $\vw (\alpha) \in \sN$                                                 & $\vw (\alpha) \in \sP$                 \\ \cline{3-4} 
\multicolumn{1}{|l|}{}                                       &                                                             & $\vw (\alpha) \in \sI$                                                 & $\vw (\alpha) \in \sPN$                \\ \hline
\multicolumn{1}{|l|}{\multirow{2}{*}{$\mb$}} & \multirow{2}{*}{$\alpha \to \Box \Diamond \alpha$}          & $\vw (\alpha) \in \mathcal{D}$ and $\vw R \vw'$     & $\vw' (\alpha) \in \sP$                \\ \cline{3-4} 
\multicolumn{1}{|l|}{}                                       &                                                             & $\vw (\alpha) \not\in \mathcal{D}$ and $\vw R \vw'$ & $\vw' (\alpha) \in \sPN$               \\ \hline
\multicolumn{1}{|l|}{\multirow{2}{*}{$\mfour$}} & \multirow{2}{*}{$\Box \alpha \to \Box \Box \alpha$}         & $\vw (\alpha) \in \sN$ and $\vw R \vw'$               & $\vw' (\alpha) \in \sN$                \\ \cline{3-4} 
\multicolumn{1}{|l|}{}                                       &                                                             & $\vw (\alpha) \in \sI$ and $\vw R \vw'$               & $\vw' (\alpha) \in \sI$                \\ \hline
\multicolumn{1}{|l|}{\multirow{4}{*}{$\mfive$}} & \multirow{2}{*}{$\Diamond \alpha \to \Box \Diamond \alpha$} & $\vw (\alpha) \in \sP$ and $\vw R \vw'$               & $\vw' (\alpha) \in \sP$                \\ \cline{3-4} 
\multicolumn{1}{|l|}{}                                       &                                                             & $\vw (\alpha) \in \sPN$ and $\vw R \vw'$              & $\vw' (\alpha) \in \sPN$               \\ \cline{2-4} 
\multicolumn{1}{|l|}{}                                       & \multirow{2}{*}{$\Box\Box\alpha \to \Box\Box\Box\alpha$}    & $\vw(\alpha), \vw'(\alpha)\in \sN$, $\vw R\vw'$ and ($ \vw R \vw''$ or $ \vw' R \vw''$)& $\vw''(\alpha)\in \sN$                                  \\ \cline{3-4} 
\multicolumn{1}{|l|}{}                                       &                                                             & $\vw(\alpha), \vw'(\alpha)\in \sI$, $\vw R\vw'$ and ($ \vw R \vw''$ or $ \vw' R \vw''$) & $\vw''(\alpha)\in \sI$              \\ \hline
\end{tabular}
}
\vspace{0.2cm}
    \caption{Relational conditions on models. }\label{table:boxCond}
	\vspace{-0.7cm}
    \end{table}

\begin{definition}[Model]\label{def:model}
Let $\M = \langle \V,\D,\Om\rangle$ be an Nmatrix for a logic $\Lan$
and  $\Lambda\subseteq\For$  closed under subformulas.
A {\em pre-model} in $\Lan$ is a pair $\langle \Pi, R\rangle$, where $\Pi\subseteq [\Lambda \to \V]_{\M}$ is a set of partial valuations, 
and $R \subseteq \Pi \times \Pi$ relates valuations s.t for all 
$\vv\in \Pi$ and $\alpha \in \Lambda$:  

\noindent~~(1) If $\vv(\alpha) \in \sP$, then $\exists  \vw \in \Pi$ such that $\vv R\vw$ and $\vw(\alpha) \in \D$;

\noindent~~(2) If $\vv(\alpha) \in \sPN$, then $\exists \vw \in \Pi$ such that $\vv R\vw$ and $\vw(\alpha) \not\in \D$.

	A {\em $\mK$-model} is a pre-model $\langle \Pi, R \rangle$ s.t. $R$ additionally satisfies the condition
$\nec$ in \Cref{table:boxCond}. For a logic $\Lan$, extending $\mK$, an {\em $\Lan$-model} is a $\mK$-model where for all (frame) properties  in \Cref{table:boxCond} that holds in $\Lan$, 
$R$ and all $\vv\in \Pi$ additionally satisfy the  respective relational conditions in \Cref{table:boxCond}.
 We say that $\vv$ is a {\em partial level-valuation} if there exists a $\Lan$-model $\langle \Pi, R \rangle$ s.t. $v\in \Pi$. \qed
\end{definition}
	
The above definition provides a \emph{procedure} to check whether a set
of partial valuations is legal, \ie, it forms a \emph{model}. 
It suffices to check that  whenever $\vv(\alpha)=\val$ is a  ($\sP\cup \sPN$)-value,  
$\vv$ must be supported by 
another valuation $\vw$ (\ie, $\vv R \vw$)
and $\vv$ and $\vw$ must  satisfy the \emph{relational conditions}
imposed by the logic at hand. 
For concreteness, we show  the resulting model conditions in 
 \Cref{fig:models} where the following notation is used.
 \begin{notation}\label{not:arrows}
     Let $\val\in \V, V\subseteq\V$. We will adopt the following notation:
\begin{itemize}
\item $\val \relcond V$: ``$\forall\alpha\in\For$, if $\vv(\alpha)=\val$ and $\vv R\vw$, then $\vw(\alpha)\in V$''.
\item $\val\arrow V$: ``$\forall\alpha\in\For$, if $\vv(\alpha)= \val$, there must exist  $\vw$ s.t. $\vv R\vw$ and  $\vw(\alpha) \in V$''. 
\end{itemize}
Any value $\val$ in $\sP,\sPN$ must be supported in the sense of \Cref{def:model}; that is, $\val \arrow V$ for some $V$. This means that if $\val \relcond V'$ then $V$ in $\val \arrow V$ must satisfy:
 \begin{itemize}
     \item to be a support for a value  $\val \in \sP$, $V=V'\cap \D(\Lan)$;
     \item to be a support for a value  $\val \in \sPN$, $V=V'\cap \D^\complement(\Lan)$.
 \end{itemize}
 \end{notation}
Note that $\vt$ and $\vf$ belong to both $\sP$ and $\sPN$. Therefore, each must be supported by both a designated and a non-designated value. This implies that, in every logic, we have $\vt \arrow V_1$ and $\vt \arrow V_2$ for some $V_1\not= V_2$, and similarly for $\vf$.

 \paragraph{Seriality.} In logics where $\md$ holds--such as extensions of
 $\mKD$ and $\mKT$--the $\md$-condition requires that $\sN$-values
 (respectively, $\sI$-values) must also be $\sP$-values (respectively,
 $\sPN$-values). This excludes from these logics the values $\vtt$ and $\vff$,
 which, as noted, characterize valuations without successors. For the other logics, 
 note that these values are
 not in $\sP$ or $\sPN$, and therefore cannot appear on the left of $\arrow$.
 Moreover, since both are in $\sN$ and $\sI$, the $\nec$-condition implies that
 if $\vv(\alpha) = \vtt$ and $\vv R \vw$, then $\vw(\alpha) \in \D$ and
 $\vw(\alpha) \notin \D$--a contradiction. This justifies the notation $\vtt
 \relcond \bullet$ and $\vff \relcond \bullet$ in \Cref{fig:models}, as a
 valuation containing these values cannot be related to any other (see
 \Cref{lemma:cons:vals}). 

 \paragraph{Reflexivity.}
 In  logics where $\mt$ holds (and hence $\md$), the $\mt$-condition  further excludes
 the values $\vttt$ and $\vfff$ since, \eg, the latter is in $\sN$ but it is not
 designated. Moreover, checking the supports ($\arrow$)
 becomes simpler as shown in \Cref{fig:models}. For instance, 
 in $\mKT$, we do not have to check whether $\vt \arrow \vT,\vt$ since 
  this condition trivially holds by reflexivity.
 Furthermore, observe that, for every $\iota \relcond V$ in logics extending $\mKT$, 
  $\iota \in V$ (which of course is not the case for values $\vttt, \vfff, \vff, \vtt$ in other logics).

 \paragraph{Symmetry and Transitivity.}
 Regarding the $\mb$-condition, note that instead of $\vT\relcond \vT,\vt,\vtt,\vttt$ as in 
 $\mK$, we have $\vT\relcond \vT,\vt$ in $\mKB$ since $\vtt$ and $\vttt$ are not in $\sP$. 
 In logics where the axiom $\mfour$ holds, the  relation is further restricted
 to $\vT \relcond \vT$ (since $\vt$ is not in $\sN$). 

\paragraph{Euclidianity.} In any logic extending $\mKfive$, $\vT
\relcond \vT, \vt$, but $\vt \not\relcond \vT$. For this reason, besides the
condition for axiom $\mfive$, we add a further condition derived from the formula 
$\Box \Box \alpha \to \Box \Box \Box \alpha$ (a theorem in any system extending
$\mKfive$). Its twofold purpose is to ensure that, for every $\mfive$-model:
(1) if $\vw R\vv$ and $\vw R\vv'$, then if $\vw(\alpha) \in \sN$ (or $\sI$) and $\vv(\alpha) = \vT$ (or $\vF$), then $v'(\alpha) = \vT$ (or
$\vF$); (2) if $\vw R\vv$ and $\vv R\vv''$, then if $\vw(\alpha) \in \sN$ (or $\sI$) and
$\vv(\alpha) = \vT$ (or $\vF$), then $\vv''(\alpha) = \vT$ (or $\vF$). This
guarantees that $\vw R\vv \land \vw R\vv'$ implies $\vv R\vv'$.

A fundamental property of our model is that the relation $\relcond$ induced by any logic $\Lan$ satisfies the standard frame conditions associated with $\Lan$. This marks a significant step toward reconciling Kripke semantics with Kearns' semantics--a completely new feature not present in previous works.

 \begin{theorem}[Frame Properties]\label{th:frame}
     (1) If axiom $\mt/\mb/\mfour/\mfive$ is valid in the logic $\Lan$, 
     then the induced $\relcond$ is
     reflexive/symmetric/transitive/Euclidian, respectively; 
     (2) If axiom $\md$ is valid in $\Lan$,
     for all $\val\in \V(\Lan)$, $\val \arrow V$ for some $V$; and 
     (3) If axiom $\md/\mt/\mb/\mfour/\mfive$ is valid in  $\Lan$, 
     and $\langle \Pi, R\rangle$ is an $\Lan$-model, 
     then there exists a $\Lan$-model $\langle \Pi, R'\rangle$ s.t. $R' \supseteq R$ and 
     $R'$ is serial/reflexive/symmetric/transitive/Euclidian. 
 \end{theorem}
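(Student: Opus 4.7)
The plan is to prove the three items by analysing the admissible values in $\V(\Lan)$ after each axiom prunes the value set, combined with the conditions in Table~\ref{table:boxCond}. Throughout, I read ``the induced $\relcond$'' as the value-level relation from \Cref{not:arrows}, which is what makes Part~(1) informative independently of the relational extension proved in Part~(3).

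For Part~(2), I would first observe that $\md$ excludes $\vtt$ and $\vff$ from $\V(\Lan)$ (since $\Box\alpha \to \Diamond\alpha$ fails for them), so every admissible value lies in $\sP \cup \sPN$; by clauses (1)--(2) of \Cref{def:model} any such value forces a successor, hence $\val \arrow V$ for the set $V$ described in \Cref{not:arrows}.

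For Part~(1), I would proceed case by case on the axiom. Take $\mt$: the logic $\mKT$ and its extensions further exclude $\vff,\vfff,\vtt,\vttt$, leaving only values for which $\iota \in \sN$ forces $\iota \in \D$ and $\iota \in \sI$ forces $\iota \notin \D$. Reading off the $\nec$-row of Table~\ref{table:boxCond}, this is exactly the value-level content of reflexivity, i.e.\ $\iota \in V$ whenever $\iota \relcond V$, as already remarked just before the theorem. Analogous analyses handle the other cases: $\mb$ restricts $\vT \relcond \{\vT,\vt\}$ together with $\vT \arrow V$, supplying the needed symmetric partner; $\mfour$ makes both $\sN$ and $\sI$ closed under $\relcond$, yielding transitivity at the value level; and the two rows for $\mfive$ together yield Euclidianity of $\relcond$ on the classes $\sN,\sI,\sP,\sPN$.

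For Part~(3), given an $\Lan$-model $\langle \Pi, R\rangle$, I would define $R'$ as the closure of $R$ under the frame property associated with the axiom (serial/reflexive/symmetric/transitive/Euclidean), using Part~(2) to supply the fresh successors needed for seriality. The job is to verify that every newly added pair still obeys the relational conditions of Table~\ref{table:boxCond} that hold in $\Lan$, together with the pre-model conditions (1)--(2). By Part~(1) the admissible values have already been pruned so that the added pairs cannot introduce any violation; e.g.\ a reflexive $(v,v)$ respects the $\nec$-condition because $v(\alpha)\in\sN \Rightarrow v(\alpha)\in\D$ in any $\mKT$-extension. The hard part will be the Euclidean case for $\mfive$: the second row in Table~\ref{table:boxCond} involves three valuations simultaneously, so showing that the Euclidean closure preserves it requires a careful combinatorial argument splitting on whether each intermediate valuation comes from the original $R$ or from the closure, invoking the value-level Euclidianity from Part~(1) to conclude in each sub-case.
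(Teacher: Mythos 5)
Your proposal is correct and follows essentially the same route as the paper: part (1) by case-by-case inspection of the induced value-level relation $\relcond$ (with the same reading of reflexivity as $\iota\in V$, the same symmetry check, and the same reliance on the $\Box\Box\alpha\to\Box\Box\Box\alpha$ condition for the Euclidean case), part (2) by observing that after $\md$ prunes $\vtt,\vff$ every remaining value lies in $\sP\cup\sPN$ and hence appears on the left of $\arrow$, and part (3) by closing $R$ under the relevant frame property and using part (1) to see that the added pairs respect the conditions of \Cref{table:boxCond}. The paper's own argument is no more detailed than yours on the Euclidean closure case you flag as delicate.
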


 \begin{proof}
     Statement (1) follows by inspecting the 
     induced $\relcond$ for each $\Lan$.
     For symmetry in  $\mKB$, we can check 
     that  $\val\relcond V_\val$ implies that for all $\valtwo\in V_\val$, $\valtwo\relcond V_\valtwo$ and 
     $\val\in V_\valtwo$. 
	 For euclidianness in (extensions of) $\mKfive$, the interesting case is when 
	  $\kappa \in \sN \cup \sI$,
	and it is solved as explained above due to the condition imposed by $\Box \Box \alpha \to \Box \Box \Box \alpha$.
For (2), note
that in all extensions of $\mKD$ (including logics where $\mt$ holds), 
     all the values of the logic appear on the left of $\arrow$.
For (3), note that $R$ in  $\langle \Pi, R\rangle$ 
	does not 
     necessarily 
     satisfy the usual frame conditions. However, due to (1), $R$ can 
     always  by extended to a 
     $R'\supseteq R$ that satisfies such conditions. \qed
 \end{proof}
\begin{figure}[!t]
    \centering
\resizebox{.96\textwidth}{!}{
    \reltable{$\mK$}{.20}
    {
        \vT &\arrow \vT, \vt, \vtt, \vttt \\
        \vt &\arrow \vT, \vt, \vtt, \vttt \\
        \vt &\arrow \vF, \vf, \vff, \vfff \\
        \vttt &\arrow \vF, \vf, \vff, \vfff \\
        \vfff &\arrow \vT, \vt, \vtt, \vttt \\
        \vf &\arrow \vT, \vt, \vtt, \vttt \\
        \vf &\arrow \vF, \vf, \vff, \vfff \\
        \vF &\arrow \vF, \vf, \vff, \vfff
    }    
    {
        \vT &\relcond \vT, \vt, \vtt, \vttt \\
        \vtt &\relcond \bullet \\
        \vttt &\relcond \vF, \vf, \vff, \vfff \\
        \vfff &\relcond \vT, \vt, \vtt, \vttt \\
        \vff &\relcond \bullet \\
        \vF &\relcond \vF, \vf, \vff, \vfff
    }
    \reltable{$\mKB$}{0.18}
        {
            \vT &\arrow \vT, \vt \\
            \vt &\arrow \vT, \vt \\
            \vt &\arrow \vf, \vfff \\
            \vttt &\arrow \vf, \vfff \\
            \vfff &\arrow \vt, \vttt \\
            \vf &\arrow \vt, \vttt \\
            \vf &\arrow \vF, \vf \\
            \vF &\arrow \vF, \vf
        }    
        {
            \vT &\relcond \vT, \vt \\
            \vt &\relcond \vT, \vt, \vf, \vfff \\
            \vtt &\relcond \bullet \\
            \vttt &\relcond \vf, \vfff \\
            \vfff &\relcond \vt, \vttt \\
            \vff &\relcond \bullet \\
            \vf &\relcond \vF, \vf, \vt, \vttt \\
            \vF &\relcond \vF, \vf
        }
        \reltable{$\mKfour$}{0.19}
        {
            \vT &\arrow \vT, \vtt \\
            \vt &\arrow \vT, \vt, \vtt, \vttt \\
            \vt &\arrow \vF, \vf, \vff, \vfff \\
            \vttt &\arrow \vF, \vff \\
            \vfff &\arrow \vT, \vtt \\
            \vf &\arrow \vT, \vt, \vtt, \vttt \\
            \vf &\arrow \vF, \vf, \vff, \vfff \\
            \vF &\arrow \vF, \vff
        }    
        {
            \vT &\relcond \vT, \vtt\\
            \vtt &\relcond \bullet \\
            \vttt &\relcond \vF, \vff \\
            \vfff &\relcond \vT, \vtt \\
            \vff &\relcond \bullet \\
            \vF &\relcond \vF, \vff
        }
        \reltable{$\mKfive$}{0.14}
        {
            \vT &\arrow \vT, \vt \\
            \vt &\arrow \vt \\
            \vt &\arrow \vf \\
            \vttt &\arrow \vF, \vf \\
            \vfff &\arrow \vT, \vt \\
            \vf &\arrow \vt \\
            \vf &\arrow \vf \\
            \vF &\arrow \vF, \vf
        }    
        {
            \vT &\relcond \vT, \vt \\
            \vt &\relcond \vt, \vf \\
            \vtt &\relcond \bullet \\
            \vttt &\relcond \vF, \vf \\
            \vfff &\relcond \vT, \vt \\
            \vff &\relcond \bullet \\
            \vf &\relcond \vt, \vf \\
            \vF &\relcond \vF, \vf
        }
        \reltable{$\mKfourfive$}{0.14}
            {
                \vT &\arrow \vT \\
                \vt &\arrow \vt \\
                \vt &\arrow \vf \\
                \vttt &\arrow \vF \\
                \vfff &\arrow \vT \\
                \vf &\arrow \vt \\
                \vf &\arrow \vf \\
                \vF &\arrow \vF
            }  
            {
                \vT &\relcond \vT \\
                \vt &\relcond \vt, \vf \\
                \vtt &\relcond \bullet \\
                \vttt &\relcond \vF \\
                \vfff &\relcond \vT \\
                \vff &\relcond \bullet \\
                \vf &\relcond \vt, \vf \\
                \vF &\relcond \vF
            }
        \reltable{$\mKBfourfive$}{0.13}
        {
            \vT &\arrow \vT \\
            \vt &\arrow \vt \\
            \vt &\arrow \vf \\
            \vf &\arrow \vt \\
            \vf &\arrow \vf \\
            \vF &\arrow \vF
        }  
        {
            \vT &\relcond \vT \\
            \vt &\relcond \vt, \vf \\
            \vtt &\relcond \bullet \\
            \vff &\relcond \bullet \\
            \vf &\relcond \vt, \vf \\
            \vF &\relcond \vF
        }
    \reltable{$\mKD$}{0.15}
        {
            \vT &\arrow \vT, \vt, \vttt \\
            \vt &\arrow \vT, \vt, \vttt \\
            \vt &\arrow \vF, \vf, \vfff \\
            \vttt &\arrow \vF, \vf, \vfff \\
            \vfff &\arrow \vT, \vt, \vttt \\
            \vf &\arrow \vT, \vt, \vttt \\
            \vf &\arrow \vF, \vf, \vfff \\
            \vF &\arrow \vF, \vf, \vfff
        }
        {
            \vT &\relcond \vT, \vt, \vttt \\
            \vttt &\relcond \vF, \vf, \vfff \\
            \vfff &\relcond \vT, \vt, \vttt \\
            \vF &\relcond \vF, \vf, \vfff
        }
    }
    \\\vspace{0.2cm}
    \resizebox{.90\textwidth}{!}{
    \reltable{$\mKDB$}{0.16}
        {
            \vT &\arrow \vT, \vt \\
            \vt &\arrow \vT, \vt \\
            \vt &\arrow \vf, \vfff \\
            \vttt &\arrow \vf, \vfff \\
            \vfff &\arrow \vt, \vttt \\
            \vf &\arrow \vt, \vttt \\
            \vf &\arrow \vF, \vf \\
            \vF &\arrow \vF, \vf
        }
        {
            \vT &\relcond \vT, \vt \\
            \vt &\relcond \vT, \vt, \vf, \vfff \\
            \vttt &\relcond \vf, \vfff \\
            \vfff &\relcond \vt, \vttt \\
            \vf &\relcond \vF, \vt, \vf, \vttt \\
            \vF &\relcond \vF, \vf
        }
     \reltable{$\mKDfour$}{0.15}
        {
            \vT &\arrow \vT \\
            \vt &\arrow \vT, \vt, \vttt \\
            \vt &\arrow \vF, \vf, \vfff \\
            \vttt &\arrow \vF \\
            \vfff &\arrow \vT \\
            \vf &\arrow \vT, \vt, \vttt \\
            \vf &\arrow \vF, \vf, \vfff \\
            \vF &\arrow \vF
        }
        {
            \vT &\relcond \vT \\
            \vttt &\relcond \vF \\
            \vfff &\relcond \vT \\
            \vF &\relcond \vF
        }
    \reltable{$\mKDfive$}{0.14}
        {
            \vT &\arrow \vT, \vt \\
            \vt &\arrow \vt \\
            \vt &\arrow \vf \\
            \vttt &\arrow \vF, \vf \\
            \vfff &\arrow \vT, \vt \\
            \vf &\arrow \vt \\
            \vf &\arrow \vf \\
            \vF &\arrow \vF, \vf
        }
        {
            \vT &\relcond \vT, \vt \\
            \vt &\relcond \vt, \vf \\
            \vttt &\relcond \vF, \vf \\
            \vfff &\relcond \vT, \vt \\
            \vf &\relcond \vt, \vf \\
            \vF &\relcond \vF, \vf
        }
    \reltable{$\mKDfourfive$}{0.12}
        {
            \vT &\arrow \vT \\
            \vt &\arrow \vt \\
            \vt &\arrow \vf \\
            \vttt &\arrow \vF \\
            \vfff &\arrow \vT \\
            \vf &\arrow \vt \\
            \vf &\arrow \vf \\
            \vF &\arrow \vF
        }
        {
            \vT &\relcond \vT \\
            \vt &\relcond \vt, \vf \\
            \vttt &\relcond \vF \\
            \vfff &\relcond \vT \\
            \vf &\relcond \vt, \vf \\
            \vF &\relcond \vF
        }

     \reltableall{$\mKT$}{0.15}
        {
            \vt &\arrow \vF, \vf \\
            \vf &\arrow \vT, \vt
        }
        {
            \vT &\relcond \vT, \vt \\
            \vF &\relcond \vF, \vf
        }
   \reltableall{$\mKTB$}{0.15}
            {
                \vt &\arrow \vf \\
                \vf &\arrow \vt
            }
            {
                \vT &\relcond \vT, \vt \\
                \vt &\relcond \vT, \vt, \vf \\
                \vf &\relcond \vF, \vt, \vf \\
                \vF &\relcond \vF, \vf
            }
    \reltableall{$\mKTfour$}{0.13}
        {
            \vt &\arrow \vF, \vf \\
            \vf &\arrow \vT, \vt
        }
        {
            \vT &\relcond \vT \\
            \vF &\relcond \vF
        }
  \reltableall{$\mKTBfourfive$}{0.13}
            {
                \vt &\arrow \vf \\
                \vf &\arrow \vt
            }
            {
                \vT &\relcond \vT \\
                \vt &\relcond \vt, \vf \\
                \vf &\relcond \vt, \vf \\
                \vF &\relcond \vF
            }
    }\caption{Relational Model.
         The absence of an entry $\val \relcond V$ is interpreted as $\val \relcond \V(\Lan)$ (any value supports $\val$). 
     For an $\Lan$ not extending $\mKT$
     where  $\val\relcond V$, 
 we further have: 
     for any $\val\in \sP$ (and $\val \in \sPN$), 
     $\val \arrow V \cap \D(\Lan)$ 
     (and  $\val \arrow V \cap \D^\complement(\Lan)$).
 \label{fig:models}}
\vspace{-0.7cm}
\end{figure}

\vspace{-0.3cm}
\paragraph{Analyticity.} Recall that Gr\"{a}tz's original algorithm constructs
the entire truth table {\em before} eliminating any rows. This naturally raises
the question: is there a way to generate only valid partial valuations from the
start, thereby avoiding the need for post hoc filtering?

In the truth table method, we start by assigning all possible values to the atomic propositions. A new column for a formula $\alpha$ is introduced only once all its subformulas have been added to the table. When working with the partial valuations defined here, one might worry that adding such columns--and checking the associated constraints--could lead to the elimination of some valuations (\ie, rows in the table), potentially resulting in an empty model. This would be problematic, as it would undermine the entire model construction process.

The following lemmas show, however, that all valid partial valuations in a model can be extended. To ensure this, we propose an algorithm that, in non-deterministic cases, selects one admissible value in a way that guarantees that all rows are preserved and correctly extended.  As a by-product, if the algorithm always selects \emph{every} admissible value (instead of only one), it will build a set containing all and only the valid partial valuations, which corresponds to the filtered truth table of Gr\"atz's  original algorithm.

\begin{notation}[Extensions]
Assume that $\Lambda\subseteq \For$ and $\Lambda \cup \{\beta\}$ are both closed
under subformulas, and let $\vv : \Lambda \to \V$ and $\vv':\Lambda\cup\{\beta\}\to \V$ be
two partial valuations. We say that $\vv'$ extends $\vv$, notation $\vv \extsym
\vv'$, if $\vv(\alpha)=\vv'(\alpha)$ for all $\alpha\in \Lambda$.
Let 
$M=\langle \Pi, R\rangle$  and 
$M'=\langle \Pi', R'\rangle$ be $\Lan$-models defined, respectively on 
$\Lambda$ and  $\Lambda\cup\{\beta\}$. 
We say that $M'$ extends $M$, notation $M\extsym M'$, 
if either $\Pi=\emptyset$ or  there is a bijection $f : \Pi \to \Pi'$
such that $\vv \extsym f(\vv)$ and
$\vv R \vw$ iff $f(\vv)R' f(\vw)$. 
\end{notation}

\begin{lemma}[Extension Lemma]\label{lemma:extlemma}
    Let  $\Lambda\subseteq\For$ and $\Lambda \cup \{\alpha\}$
    be closed under subformulas, and 
    $\langle \Pi, R\rangle$ be a $\Lan$-model 
    where $\Pi \subseteq [\Lambda \to \V]_\M$. 
    Then, there exists $\langle \Pi', R'\rangle$
    s.t. $\Pi' \subseteq [\Lambda\cup\{\alpha\} \to \V]_\M$
    and $\langle \Pi, R\rangle \extsym \langle \Pi', R'\rangle$.
\end{lemma}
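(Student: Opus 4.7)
The plan is a case analysis on $\alpha$. If $\alpha \in \Lambda$, nothing to do: take $\Pi' = \Pi$, $R' = R$, and $f$ the identity. If $\Pi = \emptyset$, the empty model serves (and $\extsym$ is vacuously satisfied). The interesting case is $\alpha \notin \Lambda$: because $\Lambda \cup \{\alpha\}$ is closed under subformulas, the immediate subformulas of $\alpha$ already lie in $\Lambda$, so the only task is, for every $v \in \Pi$, to pick a value $v'(\alpha)$ from the multi-function cell $\tilde\conn(v(\alpha_1), \ldots, v(\alpha_k))$ dictated by the principal connective $\conn$ of $\alpha$. The bijection $f \colon v \mapsto v'$ then induces $R' = \{(f(v), f(w)) \mid vRw\}$.

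To choose $v'(\alpha)$ uniformly, I would read the modal characterization of Table~\ref{table:truthVals} ``backwards'', using only information already present in $\langle \Pi, R \rangle$. For each $v \in \Pi$ I compute three Booleans: (a) the truth of $\alpha$ at $v$ (classical two-valued when $\alpha$ is $\bot$ or $\beta \to \gamma$, and ``every $R$-successor $w$ of $v$ has $w(\beta) \in \D$'' when $\alpha = \Box\beta$); (b) the truth of $\Box\alpha$ at $v$, i.e.\ that (a) holds at every $R$-successor of $v$; and (c) the truth of $\Diamond\alpha$ at $v$, i.e.\ that (a) holds at some $R$-successor. The triple (a)--(c) picks out a unique $\val$ in Table~\ref{table:truthVals}, which I declare to be $v'(\alpha)$. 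The stability clause built into $\lv{0}$ (Definition~\ref{def:level-val}) and the proviso of Lemma~\ref{lemma:cons:vals}(2) force $v'(\alpha) \in \{\vtt, \vff\}$ exactly when $v$ already takes only stable values on $\Lambda$; these valuations have no $R$-successors by Figure~\ref{fig:models}, and the forced row of $\tilde\Box$ handles them trivially.

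Two verifications remain. First, that the $\val$ just chosen lies in the appropriate cell of $\tilde\conn$: for $\bot$ this is immediate; for $\to$ it is a direct inspection of Table~\ref{table:imp}; for $\Box$ it is a per-row inspection against Tables~\ref{table:multiKBox}--\ref{table:multiKTBox}, which succeeds because the Nmatrices of Section~\ref{sec:matrices} were engineered to enumerate exactly the values that the triple (a)--(c) can take under the frame properties of Theorem~\ref{th:frame}. Second, that $\langle \Pi', R' \rangle$ is a $\Lan$-model: the $\relcond$ clauses for $\alpha$ follow because $v'(\alpha) \in \sN$ (resp.\ $\sI$) coincides by construction with the truth of $\Box\alpha$ (resp.\ $\Box\neg\alpha$) at $v$, and the sharper conclusions (e.g.\ transitivity under $\mfour$, the auxiliary $\Box\Box\alpha \to \Box\Box\Box\alpha$ condition under $\mfive$) are delivered by the corresponding frame property of $R$ granted by Theorem~\ref{th:frame}; the $\arrow$ clauses for $\alpha$ are witnessed precisely by the successor $w$ that certified (c), transported through $f$.

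The main obstacle I foresee is the $\Box$ case in the logics $\mK, \mKfour, \mKfive, \mKfourfive$ where $\tilde\Box(v(\beta))$ contains several candidates and stability, frame constraints, and the coupled $\mfive$-condition on triples $(v, w, w')$ must all be reconciled simultaneously. The resolution hinges on showing that the value dictated by the triple (a)--(c) never falls outside the permitted set, which I would tackle by tabulating the eight rows of $\tilde\Box$ against the eight admissible values of $v(\beta)$ and verifying closure row-by-row using the frame properties ensured in Theorem~\ref{th:frame}.
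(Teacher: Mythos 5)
Your proposal follows essentially the same route as the paper: extend the model one formula at a time and resolve the non-determinism of the Nmatrix cell by a deterministic criterion computed from the $R$-successors; your triple (a)--(c) is a uniform restatement of the paper's per-logic criteria, since ``all successors give $\beta$ a designated value'' coincides with $v(\beta)\in\sN$ and ``all successors give $\beta$ an $\sN$-value'' with the truth of $\Box\Box\beta$, so both versions defer the same finite, per-row verification that the selected value lands in the cell and that the relational conditions are preserved. The one case your recipe does not cover is a \emph{new atom} with $\Pi\neq\emptyset$, where there is no connective cell to read the value from (and an arbitrary uniform assignment would violate the dependency or relational conditions at valuations without successors); the paper handles this by copying an existing column of the table, which trivially inherits all the required conditions.
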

\begin{proof}
    If $\alpha\in\At$, we consider two cases: 
    If $\Pi=\emptyset$, then 
    the needed model is the  set of 
    valuations with domain $\alpha$ ($\Pi = [\{\alpha\} \to \V]_\M$)
    and $R=\{(\vv_1,\vv_2)\mid \vv_1(\alpha) \arrow \vv_2(\alpha)\}$;
    otherwise, we choose any $\beta\in \Lambda$ and 
    extend all $\vv$ with $\vv(\alpha)=\vv(\beta)$. It is easy to show that
    ``copying'' a column is a valid extension. 

    \noindent Consider the case of $\Box\alpha$ with $\alpha \in \mathrm{dom}(\Pi)$. Since the interpretation $\tilde\Box$ is not necessarily deterministic, we must ensure that all partial valuations in the model are extended in a \emph{deterministic} manner. The extension procedure depends on the logic in question. For example, in the case of $\mKT$, the situation is as follows:
    \vspace{-1.1cm}
        \begin{multicols}{2}
        \valCondTwoOneLine
        {\alpha}{\Box \alpha}
        {\vf, \vt}
        {\vF ? \vf ?}

        \valCondTwoOneLine
        {\alpha}{\Box \alpha}
        {\vT}
        {\vT ? \vt ?}

    \end{multicols}
    \vspace{-0.8cm}
\noindent
meaning that if $\vv(\alpha)\in \{\vf,\vt\}$, the non-deterministic function dictates that the 
    resulting value is either $\vF$ or $\vf$ (similarly for the figure on the right). 
    The criterion to extend $\vv$ is the following: If there are  $\vv', \vv''$ 
    s.t. $\vv R \vv'$, $\vv R\vv''$, $\vv'(\alpha)=\vT$ and $\vv'(\alpha)\neq\vT$, then 
    extend with the lowercase values; otherwise with the uppercase values. The 
    rationale is that, if there is no $v'$, $vRv'$, such that $v'(\alpha) = \vT$ or $v'(\alpha) \neq \vT$ then, in the first case, $\Box \alpha$ is always false (hence, impossible), while, in the second case (which corresponds to choose $\vT$ in the figure on the right), $\Box \alpha$ is always true (hence, necessary). Similarly, if there are both $v'(\alpha) = \vT$ and $v'(\alpha) \neq \vT$, then we can safely infer that $\Box \alpha$ is contingently false or true (lowercase values).
    The other cases assume similar criteria, and they can be found in the Appendix.

    \noindent Consider the case $\alpha \to \beta$ for  $\mKD$ (similarly for all logics since they share $\tilde\to$):
    \vspace{-1.0cm}
    \begin{multicols}{3}
        \valCondThreeOneLine
        {\alpha}{\beta}{\alpha \to \beta}
        {\vf, \vfff}
        {\vf}
        {\vT ? \vt ?}

        \valCondThreeOneLine
        {\alpha}{\beta}{\alpha \to \beta}
        {\vt, \vf}
        {\vt}
        {\vT ? \vt ?}        

    \valCondThreeOneLine
    {\alpha}{\beta}{\alpha \to \beta}
    {\vt}
    {\vf}
    {\vf ? \vfff ?}

    \end{multicols}
    \vspace{-0.8cm}

    The criterion is: if there is $v'$ such that $vRv'$, $v'(\alpha) \in \mathcal{D}$ 
	and $v'(\beta) \not\in \mathcal{D}$, then extend with $\vt$ and $\vf$; 
otherwise, extend with $\vT$ and $\vfff$. Since the implication is classical,
$\alpha \to \beta$ is falsified only when $\alpha$ is designated but $\beta$ is
not. 
 In the first case, the implication  $\alpha \to \beta$ is contingently true ($\vt$)
and $v$ can be extended with $\vt$ and $\vf$. 
In the second case, the implication is not falsified
and we can infer that $\alpha \to \beta$ is
necessarily true ($\vT$). Observe that, although $\vfff$ is a
non-designated value, it is a $\sN$-value, and then,  
 if $\alpha \to \beta$ is always true ``onward'', then $\alpha \to \beta$ is
false ``now'' but necessarily true ($\vfff$).

\end{proof}

\begin{theorem}[Analyticity]\label{th:analy}
    Every partial level-valuation can be extended to a level-valuation.
\end{theorem}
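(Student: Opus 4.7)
I would prove \Cref{th:analy} in two stages: first extend the given partial level-valuation, together with its witnessing $\Lan$-model, to a total valuation defined on all of $\For$; then verify, by induction on the level index $n$, that the resulting total valuation lies in $\lv{n}(\M)$ for every $n$, and therefore in $\lv{}(\M)$.

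For the first stage, let $v_0 : \Lambda_0 \to \V$ be the partial level-valuation and let $\langle \Pi_0, R_0\rangle$ be an $\Lan$-model with $v_0 \in \Pi_0$. Since $\For$ is countable, I would enumerate $\For \setminus \Lambda_0 = \{\beta_1,\beta_2,\ldots\}$ in an order (e.g.\ by non-decreasing complexity) that keeps each $\Lambda_n = \Lambda_{n-1} \cup \{\beta_n\}$ closed under subformulas. Iterating \Cref{lemma:extlemma} produces a chain of $\Lan$-models $\langle \Pi_n, R_n\rangle$ together with bijections $f_n : \Pi_n \to \Pi_{n+1}$ preserving extensions and relations. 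Taking the union along these bijections yields a total valuation $v = \bigcup_n v_n$ (with $v_{n+1} = f_n(v_n)$) together with a limit model $\langle \Pi, R\rangle$ on all of $\For$, containing $v$. Because every clause of \Cref{def:model} and every relational condition in \Cref{table:boxCond} involves only finitely many formulas, any such condition already witnessed at a finite stage persists under the bijections and hence in the limit.

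For the second stage, I would prove by induction on $n \geq 0$ that every $w \in \Pi$ belongs to $\lv{n}(\M)$. For the base case, if $\vff, \vtt \notin \V(\Lan)$ the stability constraint of $\lv{0}(\M)$ is vacuous; otherwise, whenever $w(\alpha) \in \{\vff,\vtt\}$ one has $w(\alpha) \in \sN \cap \sI$, so by the $\nec$-condition $w$ admits no successor in $R$ (any such $w'$ would need $w'(\alpha) \in \mathcal{D}$ and $w'(\alpha) \notin \mathcal{D}$ simultaneously). The two support clauses of \Cref{def:model} then forbid $w(\beta)$ from lying in $\sP \cup \sPN$ for any $\beta$, and since $\V \setminus (\sP \cup \sPN) = \{\vff,\vtt\}$, we get $w(\beta) \in \{\vff,\vtt\}$ for all $\beta$. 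For the inductive step, assume the statement at level $n$, fix $w \in \Pi$ and any $\alpha$ with $\lsc{n}\alpha$. By the induction hypothesis every $u \in \Pi$ has $u(\alpha) \in \mathcal{D}$, hence in particular $w(\alpha) \in \{\vT,\vt,\vtt,\vttt\}$. If $w(\alpha) \in \{\vt,\vttt\} \subseteq \sPN$, then clause (2) of \Cref{def:model} would demand some $u \in \Pi$ with $wRu$ and $u(\alpha) \notin \mathcal{D}$, contradicting the hypothesis; therefore $w(\alpha) \in \{\vT,\vtt\}$ and $w \in \lv{n+1}(\M)$.

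\textbf{Main obstacle.} The conceptual steps are straightforward, but the delicate point is the limit construction in the first stage. Each application of \Cref{lemma:extlemma} may introduce several admissible extensions of a given $w \in \Pi_n$, so a coherent choice of bijections $f_n$ must be pinned down uniformly to ensure that the supports produced for $\Diamond$-dependencies and the universal relational conditions of \Cref{table:boxCond} accumulate consistently. This is essentially a K\"onig-style argument, but it must be carried out carefully because a dependency created when $\beta_k$ is added must still be met by a valuation surviving all subsequent stages; once this bookkeeping is in place, the inductive argument of the second stage---whose force lies entirely in the interplay between the support clauses and the truth-value classification into $\sN$, $\sI$, $\sP$, $\sPN$---goes through uniformly for all fifteen logics of the cube.
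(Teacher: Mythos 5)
Your proposal is correct and follows essentially the same route as the paper: iterate \Cref{lemma:extlemma} along a subformula-closed enumeration of $\For$ to totalize the valuation, then induct on $n$ to show the resulting total valuation lies in every $\lv{n}(\M)$ (the paper's \Cref{lemma:totalf}). Your write-up is in fact slightly more careful than the paper's, since you strengthen the induction hypothesis to cover \emph{all} valuations in the limit model---which is what the paper's appeal to a supporting valuation $w\in\lv{n}(\M)$ implicitly requires---and you explicitly verify the $\{\vff,\vtt\}$-stability clause of $\lv{0}(\M)$ via the $\nec$-condition and the support clauses of \Cref{def:model}, a point the paper dismisses as obvious.
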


\begin{definition}[Partial Valuation Semantics]
Given $\Gamma \subseteq \For$, we write $\closed{\Gamma}$ to denote the smallest set containing $\Gamma$ that is closed under subformulas.
    We say that $\alpha$ is a {\em partial-semantic consequence} of $\Gamma$ in logic $\Lan$, 
    notation $\Gamma \vDash^{\Lan}\alpha $, iff
    
$\forall \vv\in[\closed{\Gamma\cup\{\alpha\}}\to \V]_\M: (\forall\beta\in\Gamma:\vv(\beta)\in\D) \Rightarrow \vv(\alpha)\in\D$. 
\end{definition}

\begin{theorem}[Soundness]\label{lem:sound-PV} 
    For every  $\Lan$,  $\Gamma \vdash^{\Lan} \alpha \Rightarrow \Gamma \vDash^{\Lan} \alpha$. 
\end{theorem}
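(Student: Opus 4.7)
The plan is to prove the contrapositive, reducing the partial-valuation semantics to the level-valuation semantics via Analyticity (Theorem~\ref{th:analy}) and then invoking soundness of the latter (Theorem~\ref{thm:soundness}). The whole argument is essentially a ``lift-and-collapse'' move: any partial countermodel to $\Gamma \vDash^{\Lan} \alpha$ is lifted to a total level-valuation countermodel to $\Gamma \lsc{} \alpha$, which is forbidden by the soundness of the level semantics.

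First, I would assume $\Gamma \not\vDash^{\Lan} \alpha$ and unpack the definition: there exists a partial (level-)valuation $\vv \in [\closed{\Gamma\cup\{\alpha\}} \to \V]_\M$ such that $\vv(\beta)\in \D$ for every $\beta\in \Gamma$ while $\vv(\alpha)\notin \D$. Next, I would invoke Theorem~\ref{th:analy} to produce a total level-valuation $\vv^\star\in \lv{}(\M)$ that extends $\vv$. By the definition of the extension relation $\extsym$, we have $\vv^\star(\gamma)=\vv(\gamma)$ for every $\gamma \in \closed{\Gamma\cup\{\alpha\}}$; in particular $\vv^\star(\beta)\in \D$ for every $\beta\in\Gamma$ and $\vv^\star(\alpha)\notin \D$. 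Hence $\Gamma \not\lsc{} \alpha$, and the contrapositive of Theorem~\ref{thm:soundness} yields $\Gamma \not\vdash^{\Lan} \alpha$, as required.

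The main subtlety I expect lies in the very first step: one must make sure that the partial valuation $\vv$ witnessing $\Gamma \not\vDash^{\Lan} \alpha$ is a \emph{partial level-valuation} in the sense of Definition~\ref{def:model}, so that Analyticity is applicable. Reading the notation $[\closed{\Gamma\cup\{\alpha\}}\to \V]_\M$ as ranging over partial level-valuations is the only interpretation consistent with the paper's model-theoretic setup and with the extension machinery of Lemma~\ref{lemma:extlemma}, which is phrased in terms of $\Lan$-models rather than arbitrary partial valuations in $\M$. Once this reading is fixed, the rest of the argument is mechanical: the extension preserves values on the original domain (so designated formulas in $\Gamma$ stay designated and the non-designated value of $\alpha$ is preserved), and no induction on proof length is needed because Theorem~\ref{thm:soundness} already packages that work.
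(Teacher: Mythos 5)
Your proof is correct, but it follows a genuinely different route from the paper's. The paper proves \Cref{lem:sound-PV} by induction on the derivation of $\Gamma \vdash^{\Lan} \alpha$, re-verifying designation of the axioms under partial valuations and handling $\mpo$ by a single application of \Cref{lemma:extlemma} to reconcile the domain $\closed{\Gamma\cup\{\alpha\}}$ with the larger domain containing the premises of the rule. You instead argue by contraposition: lift a partial countermodel to a total level-valuation via Analyticity (\Cref{th:analy}) and then invoke the already-proven soundness of the level semantics (\Cref{thm:soundness}). Your route buys economy --- none of the proof-theoretic case analysis is repeated, since it is all packaged inside \Cref{thm:soundness} --- at the cost of relying on the full strength of \Cref{th:analy}, which iterates the Extension Lemma over an enumeration of all of $\For$, whereas the paper's induction needs only finitely many extension steps per derivation. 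Your argument is also the exact mirror of the paper's own completeness proof for the partial semantics (which goes through co-analyticity and \Cref{theo:LvalComplete}), so it is arguably the more symmetric presentation. Finally, you are right that everything hinges on reading the quantifier in the definition of $\Gamma\vDash^{\Lan}\alpha$ as ranging over \emph{partial level-valuations} (members of some $\Lan$-model) rather than arbitrary partial valuations in $\M$: under the latter reading soundness would outright fail (e.g.\ a raw partial valuation may assign $\vt$ to $p\to p$ and then $\vF$ to $\Box(p\to p)$), and the paper's own inductive proof needs the same reading to discharge the necessitation and axiom cases. This is a presentational ambiguity in the paper, not a gap in your argument.
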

\begin{proof}
We proceed by induction on the derivation of $\Gamma \vdash^{\Lan} \alpha$. The case for $\mpo$ relies on \Cref{lemma:extlemma}, which ensures that any valuation $\vv$ with domain $\closed{\Gamma \cup \{\alpha\}}$ can be extended to the  domain $\closed{\Gamma \cup \{\alpha \to \beta\}}$.
\end{proof}

\paragraph{Co-analyticity.}
The completeness proof for the partial valuation semantics relies on \Cref{theo:LvalComplete}. Accordingly, we must show that level (\ie, full) valuations, when restricted to a set $\Lambda$ closed under subformulas, yield valid partial valuations. We denote the restriction of a valuation $\vv$ to the domain $\Lambda$ by $\valrestriction{\vv}{\Lambda}$. 

\begin{theorem}[Co-analyticity]\label{th:co-analy}
    For every level-valuation  $\vv \in \lv{}(\mathcal{M})$
    and every set closed under subformulas $\Lambda$, $\valrestriction{\vv}{\Lambda}$ is a
partial level-valuation. 
\end{theorem}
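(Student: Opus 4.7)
\emph{Proof plan.} The goal is to exhibit a concrete $\Lan$-model $\langle \Pi, R\rangle$ in the sense of \Cref{def:model} such that $\valrestriction{\vv}{\Lambda}\in\Pi$. The natural candidate is the \emph{canonical} one: take
\[
\Pi \;=\; \{\valrestriction{\vw}{\Lambda}\mid \vw\in\lv{}(\mathcal{M})\},
\]
so that $\valrestriction{\vv}{\Lambda}\in\Pi$ trivially, and define $R$ on $\Pi$ by declaring $\vv_1\, R\, \vv_2$ iff there exist full level-valuations $\vw_1,\vw_2$ with $\valrestriction{\vw_i}{\Lambda}=\vv_i$ such that for every formula $\gamma$, $\vw_1(\gamma)\in\sN$ implies $\vw_2(\gamma)\in\mathcal{D}$ (and dually $\vw_1(\gamma)\in\sI$ implies $\vw_2(\gamma)\notin\mathcal{D}$). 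This condition immediately guarantees the $\nec$-clause of \Cref{table:boxCond} for $R$, establishing the $\mK$-model part.

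The next step is to verify the two existence clauses of \Cref{def:model}. Suppose $\vv_1\in\Pi$ and $\valrestriction{\vv_1}{\Lambda}(\alpha)\in\sP$ for some $\alpha\in\Lambda$; pick a lift $\vw_1\in\lv{}(\mathcal{M})$. Set $\Delta_{\vw_1}=\{\beta\mid \vw_1(\beta)\in\mathcal{D}\}$, which by \Cref{thm:soundness} (and the classical behavior of $\neg$ in \Cref{table:imp}) is a maximally consistent set in $\Lan$ with $\Diamond\alpha\in\Delta_{\vw_1}$. Standard modal reasoning yields a maximally consistent $\Delta'$ with $\{\beta\mid \Box\beta\in\Delta_{\vw_1}\}\subseteq\Delta'$ and $\alpha\in\Delta'$. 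By \Cref{lemma:lval}, the characteristic function $\vw_2=v^\Lan_{\Delta'}$ is a level-valuation, so $\valrestriction{\vw_2}{\Lambda}\in\Pi$; by construction $\vw_1\,R\,\vw_2$ and $\vw_2(\alpha)\in\mathcal{D}$. The $\sPN$-case is symmetric.

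Finally, for each axiom $\mathsf{X}\in\{\md,\mt,\mb,\mfour,\mfive\}$ valid in $\Lan$, the corresponding line of \Cref{table:boxCond} follows from the well-known correspondence between $\mathsf{X}$ and the associated frame condition on the canonical relation: e.g.\ reflexivity from $\mt$ gives $\vw(\alpha)\in\sN\Rightarrow \vw(\alpha)\in\mathcal{D}$; transitivity from $\mfour$ gives the $\sN$-propagation; symmetry from $\mb$ gives the $\sP$-reflection; and seriality from $\md$ together with \Cref{th:frame}(2) gives the required successor. For the technical $\mfive$-row built from $\Box\Box\alpha\to\Box\Box\Box\alpha$, one uses that this schema is an $\mfive$-theorem, so it lies in every relevant $\Delta_{\vw}$, which forces the stated closure condition on two-step $R$-successors.

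The principal obstacle is the $\mfive$ case: the relational condition in \Cref{table:boxCond} is not merely a two-place constraint but involves a third valuation $\vw''$ reachable from either $\vw$ or $\vw'$, so one must carefully chain two applications of the canonical-model lemma and exploit Euclideanness of the canonical relation to ensure that $\vw''$ lies in the expected $\sN$/$\sI$ bucket. Everything else reduces to one-step canonical arguments combined with \Cref{lemma:adq,lemma:lval} and the Lindenbaum extension of the consistent sets $\Delta_\vw$.
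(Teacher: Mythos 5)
Your route is genuinely different from the paper's. You build an explicit $\Lan$-model in the sense of \Cref{def:model} out of the restrictions of \emph{all} level-valuations, with an accessibility relation induced by the canonical relation on the maximally consistent sets $\Delta_{\vw}=\{\beta\mid \vw(\beta)\in\D\}$, and you discharge the clauses of \Cref{def:model} by the standard existence lemma together with \Cref{lemma:lval}. The paper instead never constructs a model: following Gr\"atz it passes to ``partial\,$'$'' valuations (supported by level-valuations), assumes for contradiction that some value of $\valrestriction{\vv}{\Lambda}$ is unsupported, and uses the tailor-made formulas $\psi_{\iota}(\alpha)$ and sets $\Delta$ to turn that failure into a theorem $\Box(\gamma_0\to\cdots\to\gamma_n\to\psi_{\iota}(\alpha))$ to which $\vv$ assigns a non-designated value, contradicting soundness; the heavy lifting is completeness plus compactness of the level semantics and a per-logic verification over the Nmatrices. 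Your approach buys a conceptually cleaner, frankly more standard argument and makes the Kripke correspondence of \Cref{sec:rel} explicit; the paper's buys independence from canonical-model machinery and stays entirely inside the Nmatrix calculus.

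There is, however, a concrete gap exactly where you flag ``the principal obstacle''. Your relation $R$ on $\Pi$ is defined by existentially quantifying over lifts, and the restriction map from level-valuations to $\Pi$ is not injective: distinct $\vw_1,\vw_2\in\lv{}(\M)$ may restrict to the same $\vv\in\Pi$ while disagreeing on formulas outside $\Lambda$. The three-place $\mfive$-condition of \Cref{table:boxCond} then breaks: given $\vv\,R\,\vv'$ witnessed by lifts $(\vw_1,\vw_1')$ and $\vv\,R\,\vv''$ witnessed by $(\vw_2,\vw_2'')$, the argument needs $\Box\Box\alpha\in\Delta_{\vw_2}$, but what the first pair yields (via $\Diamond\Box\alpha\to\Box\Box\alpha$ in $\mKfive$) is $\Box\Box\alpha\in\Delta_{\vw_1}$, and since $\Box\Box\alpha\notin\Lambda$ in general this does not transfer to $\Delta_{\vw_2}$. ``Euclideanness of the canonical relation'' does not rescue this: first, your $R$ is only an existential projection of the canonical relation through a non-injective map, so Euclideanness need not survive the quotient; second, even on the canonical relation itself, Euclideanness gives $\vw'\,R\,\vw''$ and hence only $\alpha\in\Delta_{\vw''}$ from $\Box\alpha\in\Delta_{\vw'}$, whereas the condition demands $\vw''(\alpha)\in\sN$, i.e.\ $\Box\alpha\in\Delta_{\vw''}$ --- which requires chaining the $\mKfive$-theorems $\Diamond\Box\alpha\to\Box\Box\alpha$ and $\Box\Box\alpha\to\Box\Box\Box\alpha$ \emph{at a single fixed lift of $\vv$}. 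To repair the proof you must either work with an indexed family (one node per maximally consistent set, no collapse) and argue that \Cref{def:model} is insensitive to duplicating rows, or fix one lift per element of $\Pi$ and re-prove the existence clauses relative to that fixed choice; neither step is automatic, and neither appears in your sketch.
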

\begin{proof}
	Following Gr{\"{a}}tz, we consider 
\emph{partial'} valuations~\cite{DBLP:journals/logcom/Gratz22}, where
 partial valuations are supported by 
level-valuation.
 By systematically reducing level to partial valuation,
the two definitions of partial valuations are shown to be equivalent. Now we prove that level
valuations are indeed partial' valuations.
Given a formula $\alpha$, let $\alpha[\val]$ denote the values that result when evaluating 
the corresponding multifunctions in the connectives of $\alpha$
with $\val$  as in  $(\alpha\wedge \Box \alpha)[\vf]=\bigcup_\val\tilde\wedge(\vf, \val)$
where $\val\in\tilde\Box(\vf)$. 
Let $\vv$ be a
level-valuation and assume that
$\vv_p=\valrestriction{\vv}{\Lambda}$ is not partial', \ie, 
 there exists $\alpha\in \Lambda$ s.t. $\vv(\alpha)=\val$ is not supported. 
 Assume that $\iota \arrow
V_\iota$,  $\kappa \relcond V_\kappa$ in  $\Lan$
and the following situation
\begin{center}
    \includegraphics[scale=0.7]{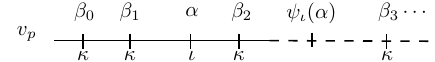}
    \end{center}
To show that there is no $\vw$ s.t. 
    $\vv R\vw$, 
     we choose $\valtwo$, $\Delta=\{\beta_0\cdots\}$,  and $\psi_\val(\alpha)$ s.t.
$\psi_\val(\alpha)[\val] \in \{\vt,\vttt\}$,  $\beta_i[\kappa] = \{\vT\}$, and 
    $\vw(\alpha) \in V_\iota$ iff $w(\psi_\iota(\alpha)) \not\in \mathcal{D}$ (if $\Delta = \emptyset$ the proof below is the same.) 
The choices are guided by  
the modal characterization of the truth values (e.g., 
in $\mKT$ $\psi_\vf(\alpha)=\vF(\alpha)\vee \vf(\alpha)$, the other cases are in the Appendix. 
The idea is to show that, 
if $\vw$ respects the $\relcond$-conditions, then it cannot assign the expected
supporting value in $\vw(\alpha)$,
 i.e., 
     $w(\psi_\iota(\alpha)) \not\in \mathcal{D}$ iff there is some $i \geq 0$ such that $w(\beta_i) \not\in \mathcal{D}$. Hence,  
    $\Delta \models^{\Lan} \psi_\iota(\alpha)$.
    By completeness (of level-semantics) $\Delta \vdash^{\Lan} \psi_\iota(\alpha)$, 
    and by compactness of $\Lan$, 
     $\Delta' \vdash^{\Lan}\psi_\iota(\alpha)$ 
     for a finite    $\Delta \supseteq \Delta'=\{\gamma_0,\cdots,\gamma_n\}$. 
     By $\nec$ and the deduction theorem, 
    $ \vdash^{\Lan}\Box(\gamma_0 \to \cdots \to \gamma_n \to \psi_\iota(\alpha))$.
    By soundness (of level-semantics), $\models^{\Lan}\Box(\gamma_0 \to \cdots \to \gamma_n \to \psi_\iota(\alpha))$. 
We then find a contradiction since, using the Nmatrices, we can check that $\vv(\Box(\gamma_0 \to \cdots \to \gamma_n \to \psi_\iota(\alpha))) \not\in \mathcal{D}$.

\end{proof}

\begin{theorem}[Completeness]
    For every $\Lan$,  $\Gamma \vDash^{\Lan} \alpha \Rightarrow 
     \Gamma \vdash^{\Lan} \alpha$. 
\end{theorem}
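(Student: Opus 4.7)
The plan is to argue by contrapositive, chaining together the two completeness-type results already available: completeness of the level-valuation semantics (\Cref{theo:LvalComplete}) and co-analyticity (\Cref{th:co-analy}). Concretely, I would assume $\Gamma \nvdash^{\Lan} \alpha$ and construct a partial level-valuation on $\closed{\Gamma\cup\{\alpha\}}$ that witnesses $\Gamma \nvDash^{\Lan}\alpha$.

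First, from $\Gamma \nvdash^{\Lan}\alpha$, I would invoke the contrapositive of \Cref{theo:LvalComplete} (which itself goes through the $\alpha$-saturated extension $\Delta\supseteq\Gamma$ and the characteristic function $v^{\Lan}_{\Delta}$). This yields a level-valuation $v\in \lv{}(\M)$ such that $v(\beta)\in\D$ for every $\beta\in\Gamma$ while $v(\alpha)\notin\D$. Next, let $\Lambda = \closed{\Gamma\cup\{\alpha\}}$, which is closed under subformulas by construction, and form the restriction $v_p = \valrestriction{v}{\Lambda}$. By \Cref{th:co-analy}, $v_p$ is a partial level-valuation on $\Lambda$, hence $v_p\in[\Lambda\to\V]_\M$ in the sense required by the partial valuation semantics. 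Since $\Gamma\cup\{\alpha\}\subseteq\Lambda$, the restriction leaves the relevant values untouched: $v_p(\beta) = v(\beta)\in\D$ for all $\beta\in\Gamma$ and $v_p(\alpha) = v(\alpha)\notin\D$. This exhibits a partial valuation violating the condition in the definition of $\vDash^{\Lan}$, so $\Gamma\nvDash^{\Lan}\alpha$, as required.

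The argument itself is short because all the heavy lifting has been done earlier. The one point that needs care — and which I expect reviewers to check — is that the definition of $\vDash^{\Lan}$ does indeed quantify over \emph{partial level-valuations} (not over arbitrary partial valuations in $\M$); this is implicit in the notation $[\closed{\Gamma\cup\{\alpha\}}\to\V]_\M$ and is what makes co-analyticity the right tool. Given this reading, the only real obstacle is already absorbed into \Cref{th:co-analy}, whose proof relies on the modal characterization of truth values and on the completeness plus compactness of the level semantics. Hence the completeness of the partial valuation semantics follows as an immediate corollary of those two results.
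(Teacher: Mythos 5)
Your proposal is correct and is essentially the paper's intended argument: the paper leaves the final theorem proofless precisely because, as its preamble states, completeness of the partial valuation semantics follows by chaining \Cref{theo:LvalComplete} (to obtain a level-valuation separating $\Gamma$ from $\alpha$) with \Cref{th:co-analy} (to restrict it to $\closed{\Gamma\cup\{\alpha\}}$ as a partial level-valuation). Your remark that $\vDash^{\Lan}$ must be read as quantifying over partial level-valuations is the right reading and correctly identifies why co-analyticity is the needed ingredient.
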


\section{Concluding Remarks}\label{sec:conc}

This paper generalizes previous non-deterministic semantic approaches to
the entire modal cube of  normal modal logics. First, 
Kearns' semantics based on Nmatrices with level valuations was
extended to the full modal cube, starting from an 8-valued Nmatrix for the
basic logic
$\mK$. 
Second, the decision procedure for $\mKT$ and
$\mSfour$ using partial valuations over a 3-valued Nmatrix proposed by
Gr\"{a}tz, was extended to all systems of the modal cube, again starting from
the 8-valued Nmatrix for $\mK$. 
A distinctive feature of our framework is that it bridges Nmatrix semantics with Kripke semantics by ensuring that the resulting models satisfy the standard frame conditions associated with each modal system.

Our results are supported by prototypical tools~\cite{tool}. An OCaml
implementation generates truth tables and visualizes models, making
the Kripke correspondence explicit (and helping us to prove, \eg,
that modal axioms are indeed tautologies). We have a very
preliminary and partial formalization of our results in Rocq,
initially focusing on~$\mSfive$. Given the central role of analyticity in our approach, we developed a
Maude specification~\cite{DBLP:journals/jlap/DuranEEMMRT20} to explore
all possible extensions (\Cref{lemma:extlemma}) and automatically
verify their validity.

Future work includes extending our framework to intuitionistic and ecumenical~\cite{DBLP:conf/dali/MarinPPS20} modal logics by
combining it with the approach from~\cite{IPL}. Another line of investigation
involves the relationship between Kripke and Nmatrix countermodels. For instance, the minimal
Kripke countermodel for~$\Diamond \alpha \to \Box \alpha$ in~$\mKT$ involves
two worlds, which is mirrored in the matrix semantics by a two-row model
assigning ${\vt, \vF}$ to~$\alpha$. Understanding this correspondence may yield
insights into the complexity-theoretic aspects of the frameworks. Finally, we
intend to describe the Nmatrices and the level valuation restrictions
analytically using swap structures, as done e.g., 
in~\cite{ConiglioPS}. This representation should enable the generalization of 
the decision procedures proposed here to a broader class of modal
systems.
 
\subsubsection{Acknowledgments}
This study was financed, in part, by the S\~{a}o Paulo
Research Foundation (FAPESP), Brasil. Process Number n. 2023/16021-9 and n.
2021/01025-3. 
The work of Olarte was partially funded by  the NATO Science for Peace and Security Programme through grant number  G6133 (project SymSafe)
and the SGR project PROMUEVA (BPIN 2021000100160) under the supervision of Minciencias Colombia. 
Pimentel is supported by the Leverhulme grant RPG-2024-196 and
has received funding from the European Union's Horizon 2020 research and
innovation programme under the Marie Sk\l odowska-Curie grant agreement Number
101007627. 
Coniglio acknowledges support by an individual research grant from
the National Council for Scientific and Technological Development (CNPq,
Brazil), grant 309830/2023-0. He was also supported by the S\~ao Paulo
Research Foundation (FAPESP, Brazil), thematic project {\em
Rationality, logic and probability -- RatioLog}, grant  2020/16353-3.
We are grateful for the useful suggestions from Carlos Caleiro and Sergio Marcelino, as well as the anonymous referees.

\newpage
\bibliographystyle{splncs04}

\newpage 

\appendix

\section{Detailed Proofs}\label{sec:app-proofs}

In this appendix we give some more details of the proofs sketched in the main
body of the paper. We also prove some auxiliary  results. 

\subsection{Proofs in \Cref{sec:level}}\label{app:sec-level}

\begin{proof}[of \Cref{lemma:adq}]
    As noticed in the proof sketch of this lemma, it suffices to show
    the implication (1) $\to$ (2) where (1) is the hypotheses
    we have due to the fact that $v^{\Lan}_{\Delta}(\alpha_0)$
    took a value in a set $V$ and hence, 
    the goal is to prove (2) $\Delta \vdash^{\Lan}\bigvee_{\valtwo\in V}\funchar{\valtwo}{\Box \alpha_0}$.
	All the cases are easy consequences of $\vdash^{\Lan}$. 
    See \Cref{tab:adq1,tab:adq2,tab:adq3} for the proof obligations due to $\Box$
	and \Cref{tab:cfunc-imp} for the cases due to $\to$. 
    \qed
\end{proof}

\begin{table}[h!]\centering
    \resizebox{\textwidth}{!}{

    }
    \vspace{0.3cm}
    \caption{Showing that $v_{\Delta}^{\mK}(\cdot)$ is a level valuation. The cases for implication for all the logics. In $\mKT*$, and $\mKBfourfive$ $\tilde \to (\vt, \vf) = \{ \vf \}$, given that $\vfff$ is not a valid truth-value on those families. However, it can be checked that the same goal is satisfied. The other cases are the same for all the logics, changing only the domain of $\tilde \to$. \label{tab:cfunc-imp}}
\end{table}

    \subsection{Proofs in \Cref{sec:truth}}\label{app:sec-t}

The proof of \Cref{th:analy} requires to show that a ``total'' partial valuation (defined on all the formulas)
is indeed a level valuation. 

\begin{lemma}[Analyticity of total functions]\label{lemma:totalf}
    For every partial level valuation $v$, if $v$ is total
    then $v \in \lv{n}(\mathcal{M})$ for all $n$.
\end{lemma}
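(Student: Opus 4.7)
The plan is to strengthen the statement to the uniform claim: if $\langle \Pi, R \rangle$ is any $\Lan$-model, then $\Pi \subseteq \lv{n}(\M)$ for every $n$. The reduction to this claim is immediate: since $v$ is a total partial level-valuation, Definition 7 provides an $\Lan$-model $\langle \Pi, R \rangle$ with $v \in \Pi$, and because every element of $\Pi$ shares the common domain $\Lambda$, totality of $v$ forces $\Lambda = \For$, so every $w \in \Pi$ is total as well. I would then induct on $n$.

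For the base case ($n=0$), each $w \in \Pi$ automatically respects the Nmatrix by Definition 3. The only nontrivial obligation is the stability clause on $\{\vff,\vtt\}$: if $w(\alpha) \in \{\vff,\vtt\}$ for some $\alpha$, one must conclude $w(\beta) \in \{\vff,\vtt\}$ for every $\beta$. The key observation is that $\vff$ and $\vtt$ both lie in $\sN \cap \sI$, so by the $\nec$-row of \Cref{table:boxCond} any $R$-successor $w'$ of $w$ would have to satisfy $w'(\alpha) \in \D$ and $w'(\alpha) \notin \D$ simultaneously. Thus $w$ has no $R$-successor. But clauses (1) and (2) of \Cref{def:model} require any value in $\sP \cup \sPN$ to be supported by a successor, whence every $w(\beta)$ must belong to $\V \setminus (\sP \cup \sPN) = \{\vff,\vtt\}$, as required. (In logics extending $\mKD$ the values $\vff,\vtt$ are excluded altogether and the clause is vacuous.)

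For the inductive step, assume $\Pi \subseteq \lv{n}(\M)$ and fix $w \in \Pi$; I want to show $w \in \lv{n+1}(\M)$. Suppose $\lsc{n}\alpha$. Since $w \in \lv{n}(\M)$ by the induction hypothesis, we already have $w(\alpha) \in \D = \{\vT,\vt,\vtt,\vttt\}$, so it only remains to rule out $w(\alpha) \in \{\vt,\vttt\}$. If $w(\alpha) \in \{\vt,\vttt\} \subseteq \sPN$, then by clause (2) of \Cref{def:model} there exists $w' \in \Pi$ with $wRw'$ and $w'(\alpha) \notin \D$; but $w' \in \lv{n}(\M)$ by the inductive hypothesis and $\lsc{n}\alpha$ then forces $w'(\alpha) \in \D$, a contradiction. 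Hence $w(\alpha) \in \{\vT,\vtt\}$, and $w \in \lv{n+1}(\M)$. The main subtlety is really just the base case: one must be careful that the relational conditions, and not the Nmatrix alone, are what rule out mixed valuations containing one of $\vff,\vtt$ alongside a value in $\sP \cup \sPN$. Once this is settled, the inductive step is a one-line consequence of the fact that $\vt$ and $\vttt$ are in $\sPN$.
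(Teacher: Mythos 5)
Your proof is correct and follows the same core idea as the paper's: induct on $n$, and in the inductive step use the fact that $\vt,\vttt\in\sPN$ to extract a supporting valuation assigning a non-designated value to $\alpha$, contradicting $\lsc{n}\alpha$. The difference is one of packaging and rigor rather than of route. The paper inducts on the single valuation $v$ and, when it needs the witness $w$ with $w(\beta)\notin\D$, simply asserts that $w\in\lv{n}(\M)$; strictly speaking this requires knowing that the supporting valuation (which lives in the same model $\Pi$) is itself total and covered by the induction hypothesis. Your strengthening of the claim to ``$\Pi\subseteq\lv{n}(\M)$ for all $n$'' is exactly what legitimizes that step, and your observation that totality of one member of $\Pi$ forces totality of all of them (shared domain $\Lambda=\For$) closes the loop cleanly. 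Likewise, the paper dispatches the base case with ``clearly $v$ is a valuation function,'' which ignores the extra stability clause on $\{\vff,\vtt\}$ built into $\lv{0}(\M)$ in \Cref{def:level-val}; your argument---that a valuation taking a value in $\sN\cap\sI$ can have no $R$-successor under the $\nec$-conditions, hence can take no value in $\sP\cup\sPN$, hence only values in $\{\vff,\vtt\}$---is the missing justification, and it matches the informal discussion in the paper's ``Seriality'' paragraph. In short: same proof skeleton, but your version patches two gaps the paper leaves implicit.
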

\begin{proof}
     By induction  on $n$. 

    \textbf{($n = 0$)} Clearly $v$ is a valuation function and hence $v \in \lv{0}(\M)$.

    \textbf{(IH)} For every $0 \leq i \leq n$, $v \in \lv{i}(\M)$. 

    \textbf{Inductive step} Assume that $v \not\in \lv{n+1}(\M)$. 
Then, there is some formula $\beta$ such that 
$\vDash^{\mathcal{L}_{n}} \beta$, but $v(\beta) \not\in \{ \vT, \vtt \}$. 
Hence, either $v (\beta) \not\in \mathcal{D}$ or $v(\beta) \in \{ \vt, \vttt \}$. 
The first possibility clearly contradicts the hypothesis. Now, assume that 
$v(\beta) = \vt$. In this case, $v(\beta) \in \sPN$. 
Moreover, by (IH), $v \in \lv{n}(\M)$. Hence, 
there is a  partial level valuation $ w \in \lv{n}(\M)$ such that 
$w(\beta) \not\in \mathcal{D}$, which is a contradiction. The case  $v(\beta) = \vttt$ is similar.
\qed
\end{proof}

%

The following example illustrates the fact that a model $\langle \Pi, R  \rangle$ does not necessarily 
satisfy the expected frame conditions. However, due to \Cref{th:frame},
$R$ can always be extended with additional pairs to meet those conditions. 
\begin{example}\label{ex:frames}
    Consider the following valuations in $\mSfour$. \\

\begin{tabular}{|c|c|c|c|c|c|}
\hline
& $~\ldots$ \; &$\alpha$ & $\beta$ &  \ldots\\ \hline
$\vv_1$ & $\ldots$& $\vf$  &  $\vt$ & \ldots  \\ \hline
$\vv_2$ & $\ldots$  & $\vt$  &  $\vF$ & \ldots \\ \hline
$\vv_3$ & $\ldots$   & $\vT$   & $\vF$ &\ldots \\ 
$\vdots$    &    &    & & \\ \hline
\end{tabular}
\\

The relation $R=\{(v_1,v_2), (v_2,v_3)\} \cup \{(\vv_i,\vv_i)\mid i\in \{1,2,3\}\}$
is enough to show that $\langle \Pi, R\rangle$, where $\Pi=\{v_i\mid i\in \{1,2,3\}\}$, 
is indeed a model (whenever $\relcond$ does not fail due to the values in the columns
not shown). Hence, $R$ is not transitive. However, we can add to $R$ the pair $(v_1,v_3)$
(indeed, the $\sP$ and $\sPN$ values in $\vv_1$ can be also supported by $\vv_3$)
resulting in a transitive relation. 
\end{example}

\begin{proof}[of \Cref{lemma:extlemma}] 
    We show below how to extend a partial valuation with a formula $\Box \alpha$
    given that $\alpha$ is in the domain of the valuation. 
    We consider the non-deterministic cases of the multifunction $\tilde\Box$ in each logic, 
    and provide a criterion to select on of the possible values. 
\begin{description}
    \item[$\mKTB$] If $\exists v'$ such that $vRv'$ and $v'(\alpha) \in \{ \vT, \vt \}$ and 
        $v'(\alpha) \in \{ \vF, \vf \}$, then extend $v$ with $\vt$. Else, extend with $\vT$.
    \item[$\mKTfour$]  If $\exists v'$ such that $vRv'$ and $v'(\alpha) = \vT$, then extend $v$ with $\vf$. Else, extend with $\vF$.
    \item[$\mK, \mKD, \mKDB, \mKB$] \ 
    \begin{enumerate}
        \item If, $\forall v'$ such that $vRv'$, $v'(\alpha) \in \sN$, then extend with $\vT, \vfff$.
        \item If, $\forall v'$ such that $vRv'$, $v'(\alpha) \not\in \sN$, then extend with $\vF, \vttt$.
        \item If $\exists v', v''$ such that $vRv'$ and $vRv''$, $v'(\alpha) \in \sN$ and $v''(\alpha) \not\in \sN$, then extend with $\vf, \vt$.
    \end{enumerate}
    \item[$\mKfour, \mKDfour$] \ 
    \begin{enumerate}
        \item If, $\forall v'$ such that $vRv'$, $v'(\alpha) \in \sN$, then extend with $\vfff$.
        \item If, $\forall v'$ such that $vRv'$, $v'(\alpha) \not\in \sN$, then extend with $\vF$.
        \item If $\exists v', v''$ such that $vRv'$ and $vRv''$, $v'(\alpha) \in \sN$ and $v''(\alpha) \not\in \sN $, then extend with $\vf$.
    \end{enumerate}
    \item[$\mKfive, \mKDfive$]  If $\exists v'$ such that $vRv'$ and $v'(\alpha) \not\in \sN$, then extend with $\vttt$. Otherwise, extend with $\vT$.
\end{description}

It is worth noticing that  we always take the  same  decision to extend a valuation 
$v$ by considering the others valuations $v'$ such that $vRv'$. To
decide how to extend $v'$, we use the same criteria, based on  the frame conditions of
$\relcond$ (\Cref{th:frame}) in each case. For example, in any logic extending $\mKB$, if we
choose a lower case in $v$, we should also choose a lower case in $v'$, due to
symmetry. In any logic extending $\mKfour$, if we choose an $\sN$-value (or
$\sI$-value), then we must keep choosing an $\sN$-value (or $\sI$-value), due
to transivity.\qed
\end{proof}

\begin{proof}[of \Cref{th:analy}]
    By \Cref{lemma:extlemma}, and a suitable enumeration of $\For$, a partial valuation can be extended to 
    a valuation with domain $\For$. By \Cref{lemma:totalf}, this extended valuation is a
     level valuation. \qed
\end{proof}

\begin{proof}[of \Cref{th:co-analy}]
As noticed in the proof sketch of this result, we have to identify 
a set of formulas $\Delta$ taking a value $\valtwo$ in 
 $\vv$, and a formula $\psi_\valtwo(\alpha)$
to show that, indeed, 
the value $\val(\alpha)$  must necessarily be supported. 
The needed $\Delta$ in each case is given in 
\Cref{table:delta}. The 
needed formula  $\psi_\valtwo(\alpha)$ is obtained as follows:

\noindent For $\mKT\star, \mKBfourfive$: 

\begin{tabular}{ccc}
    $\scriptsize
\psi_\vf (A) = 
     \begin{cases}
        \vF(A) \lor \vf (A) & \mKT, \mKTfour \\
        \vF(A) \lor \vf(A) \lor \vT(A) & \mKTB \\ 
        \neg \vt(A)& \mKTBfourfive, \mKBfourfive
     \end{cases}
$ & \qquad &\\
$\scriptsize
\psi_\vt (A) = 
     \begin{cases}
        \vT(A) \lor \vt (A) & \mKT, \mKTfour \\
        \vT(A) \lor \vt (A) \lor \vF(A) & \mKTB \\
        \neg \vf(A) & \mKTBfourfive, \mKBfourfive
     \end{cases}
$
\end{tabular}

\noindent For $\mKD\star$:

\begin{tabular}{ccc}
$\scriptsize
    \psi_{\vf} (A) = 
         \begin{cases}
            \neg A & \mKD, \mKDfour \\
            \neg A \lor \vT(A) & \mKDB \\ 
            \neg A \lor \vttt(A) & \mKDfive \\
            \neg \vt (A) & \mKDfourfive
         \end{cases}
$ \qquad 
$\scriptsize
       \psi_{\vt} (A) = 
         \begin{cases}
            A & \mKD, \mKDfour \\
            A \lor \vF(A) & \mKDB \\ 
            A \lor \vfff(A) & \mKDfive \\
            \neg \vf (A) & \mKDfourfive
         \end{cases}
$\\
$\scriptsize
\psi_{\vfff} (A) = 
     \begin{cases}
        \vF(A) \lor \vf (A) \lor \vfff (A) & \mKD \\
        \vF(A) \lor \vf(A) \lor \vfff (A) \lor \vT(A) & \mKDB \\ 
        \vF(A) \lor \vf(A) \lor \vfff (A) \lor \vttt(A) & \mKDfive \\
        \neg \vT (A) & \mKDfour, \mKDfourfive
     \end{cases}
$\\
$\scriptsize
\psi_{\vttt} (A) = 
     \begin{cases}
        \vT(A) \lor \vt (A) \lor \vttt (A) & \mKD \\
        \vT(A) \lor \vt(A) \lor \vttt (A) \lor \vF(A) & \mKDB \\ 
        \vT(A) \lor \vt(A) \lor \vttt (A) \lor \vfff(A) & \mKDfive \\
        \neg \vF (A) & \mKDfour, \mKDfourfive
     \end{cases}
$
\end{tabular}

\noindent For $\mK\star$: 

\noindent\begin{tabular}{ccc}
$\scriptsize
    \psi_{\vf} (A) = 
         \begin{cases}
            \neg A & \mK, \mKfour \\
            \neg A \lor \vT(A) \lor \vtt(A) & \mKB \\ 
            \neg \vt (A) & \mKfive, \mKfourfive
         \end{cases}
         $&\quad&
$\scriptsize
    \psi_{\vt} (A) = 
         \begin{cases}
            A & \mK, \mKfour \\
            A \lor \vF(A) \lor \vff(A) & \mKB \\ 
            \neg \vf (A) & \mKfive, \mKfourfive
         \end{cases}
$\\\\
$\scriptsize
    \psi_{\vfff} (A) = 
         \begin{cases}
            \neg A & \mK \\
            \neg A \lor \vT(A) \lor \vtt(A) & \mKB \\ 
            \neg A \lor \vt(A) \lor \vttt(A) & \mKfour \\
            \neg A \lor \vtt (A) \lor \vttt (A) & \mKfive \\
            \neg \vT(A) & \mKfourfive
         \end{cases}
         $&\quad&
$\scriptsize
    \psi_{\vttt} (A) = 
         \begin{cases}
            A & \mK \\
            A \lor \vF(A) \lor \vff(A) & \mKB \\ 
            A \lor \vf(A) \lor \vfff(A) & \mKfour \\
            A \lor \vff (A) \lor \vfff (A) & \mKfive \\
            \neg \vF(A) & \mKfourfive
         \end{cases}
$\\\\
$\scriptsize
\psi_\vf (A) = 
     \begin{cases}
        \vF(A) \lor \vf (A) & \mKT, \mKTfour \\
        \vF(A) \lor \vf(A) \lor \vT(A) & \mKTB \\ 
        \neg \vt(A)& \mKTBfourfive
     \end{cases}
$
\end{tabular}

Proving the expected properties of these formulas  is just a matter
of using the respective Nmatrices. For instance, in $\mKTB$ (abusing of the notation of $\tilde\conn$ and omitting $\{\cdot\}$ in 
singletons), we show below that $\psi_\val(\alpha)[\val] \in \{\vt,\vttt\}$:

{\scriptsize
    \begin{align*}
        \vf(A)[\vf] &= \tilde\neg (\vf) \tilde\land \tilde\Diamond^{\mKTB} (\vf) \tilde\land \tilde\Diamond^{\mKTB} (\tilde\neg (\vf)) \\
        &= \vt \tilde\land \{ \vT, \vt \} \tilde\land \tilde\Diamond^{\mKTB} (\vt) \\
        &= \vt \tilde\land \{ \vT, \vt \} \tilde\land \vT \\
        &= \vt \\
        \vF(A)[\vf] &= \tilde\neg (\vf) \tilde\land \tilde\Box^{\mKTB} (\tilde\neg (\vf)) \\
        &= \vt \tilde\land \tilde\Box^{\mKTB} (\vt) \\
        &= \vt \tilde\land \{ \vF, \vf \} \\
        &= \vF \\
        \vT(A)[\vf] &= \vf \tilde\land \tilde\Box^{\mKTB} (\vf) \\
        &= \vf \tilde\land \vF \\
        &= \vF \\
     &\therefore \vF(A)[\vf] \tilde\lor \vf (A)[\vf] \tilde\lor \vT(A)[\vf] = \vt
    \end{align*} 
}
    \qed
\end{proof}

\begin{table}\centering
\begin{subtable}[b]{0.47\textwidth}
\resizebox{\textwidth}{!}{
    \begin{tabular}{|l|l|}
        \hline
        \multicolumn{1}{|c|}{\textbf{Case}} & \multicolumn{1}{c|}{\textbf{$\Delta$}}                                          \\ \hline
        $\vT \relcond \vT, \vt$             & $\Delta = \{ \beta \mid v_p(\beta) = \vT \}$                                    \\ \hline
        $\vF \relcond \vF, \vf$             & $\Delta = \{ \neg \beta \mid v_p(\beta) = \vF \}$                               \\ \hline
        $\vt \relcond \vT, \vt, \vf$        & $\Delta = \{ \Diamond \beta \mid v_p(\beta) = \vt  \}$                          \\ \hline
        $\vf \relcond \vF, \vt, \vf$        & $\Delta = \{\Diamond \neg \beta \mid v_p(\beta) = \vf \}$                       \\ \hline
        $\vT \relcond \vT$                  & $\Delta = \{ \Box \beta \mid v_p(\beta) = \vT \}$                               \\ \hline
        $\vF \relcond \vF$                  & $\Delta = \{ \Box \neg \beta \mid v_p(\beta) = \vF \}$                          \\ \hline
        $\vt \relcond \vt, \vf$             & $\Delta = \{ \Diamond \beta \land \Diamond \neg \beta \mid v_p(\beta) = \vt \}$ \\ \hline
        $\vf \relcond \vt, \vf$             & $\Delta = \{ \Diamond \beta \land \Diamond \neg \beta \mid v_p(\beta) = \vf \}$ \\ \hline
        \end{tabular}
}
		\vspace{0.3cm}
        \caption{$\Delta$ for $\mKT\star$ and $\mKBfourfive$.\label{tab:delta1}}
    \end{subtable}
\begin{subtable}[b]{0.47\textwidth}
\resizebox{\textwidth}{!}{
        \begin{tabular}{|l|l|}
        \hline
        \multicolumn{1}{|c|}{\textbf{Case}} & \multicolumn{1}{c|}{\textbf{$\Delta$}}                                   \\ \hline
        $N \relcond \mathcal{D}$            & $\Delta = \{ A \to C \mid v_p(C) \in N \}$                               \\ \hline
        $I \relcond \bar{\mathcal{D}}$      & $\Delta = \{ A \to \neg C \mid v_p(C) \in I \}$                          \\ \hline
        $\mathcal{D} \relcond P$            & $\Delta = \{ A \to \Diamond C \mid v_p(C) \in \mathcal{D} \}$            \\ \hline
        $\bar{\mathcal{D}} \relcond PN$     & $\Delta = \{ A \to \Diamond \neg C \mid v_p(C) \in \bar{\mathcal{D}} \}$ \\ \hline
        $N \relcond N$                      & $\Delta = \{ \Box C \mid v_p(C) \in N \}$                                \\ \hline
        $I \relcond I$                      & $\Delta = \{ \Box \neg C \mid v_p(C) \in I \}$                           \\ \hline
        $P \relcond P$                      & $\Delta = \{ \Diamond C \mid v_p(C) \in  P \}$                           \\ \hline
        $PN \relcond PN$                    & $\Delta = \{ \Diamond \neg C \mid v_p(C) \in  PN \}$                     \\ \hline
        \end{tabular}
}
		\vspace{0.3cm}
        \caption{$\Delta$ for $\psi_\vfff$ and $\psi_\vf$ in $\mKD\star$ and $\mK\star$. For $\psi_\vttt$ and $\psi_\vt$, replace $A$ with $\neg A$. \label{tab:delta2}}
        \end{subtable}
\caption{Needed $\Delta$ for the proof of Co-analyticity \label{table:delta}}
\end{table}

\end{document}